\pgfplotsset{compat=newest}
\definecolor{mylightyellow}{rgb}{1,1,.8}
\definecolor{mylightgreen}{rgb}{.8,1,.8}
\definecolor{mydarkred}{RGB}{178,34,34}
\definecolor{mydarkgreen}{RGB}{34,139,34}
\definecolor{mydarkblue}{RGB}{72,61,139}
\definecolor{mydarkyellow}{RGB}{218,165,32}
\definecolor{myblueA}{RGB}{52,41,39}
\definecolor{myblueB}{RGB}{92,81,109}
\definecolor{myblueC}{RGB}{132,121,179}
\definecolor{myblueD}{RGB}{172,161,249}
\tikzstyle{nameusd} = [circle, draw, top color=white, bottom color=mydarkblue!50, draw=mydarkblue!75!black!100, drop shadow, minimum height=4em]
\tikzstyle{nameeur} = [circle, draw, top color=white, bottom color=mydarkred!50, draw=mydarkred!75!black!100, drop shadow, minimum height=4em]
\tikzstyle{investor} = [rectangle, draw, top color=white, bottom color=mydarkyellow!50, draw=mydarkyellow!75!black!100, drop shadow, rounded corners, minimum height=3em, text width=4em, text centered]
\tikzstyle{collusd} = [rectangle,fill=mydarkblue!10, inner sep=0.2cm, rounded corners=5mm]
\tikzstyle{colleur} = [rectangle,fill=mydarkred!10, inner sep=0.2cm, rounded corners=5mm]
\tikzstyle{fixto} = [draw, -latex']
\tikzstyle{fixfrom} = [draw, latex'-]
\tikzstyle{floatto} = [draw, snake=coil, segment aspect=0, line before snake=2ex, line after snake=1ex, -latex']
\tikzstyle{floatfrom} = [draw, snake=coil, segment aspect=0, line before snake=2ex, line after snake=1ex, latex'-]
\newtheorem{theorem}{Theorem}[section]
\newtheorem{lemma}[theorem]{Lemma}
\newtheorem{remark}[theorem]{Remark}
\newtheorem{proposition}[theorem]{Proposition}
\newcommand{\Eq}[1]{{\[{#1}\]}}
\newcommand{\Ex}[2]{\mathbb{E}_{#1}\!\left[\,#2\,\right]}
\newcommand{\ExC}[3]{\mathbb{E}_{#1}\!\left[\left.\,#2\,\right|\,#3\,\right]}
\newcommand{\ind}[1]{1_{\{#1\}}}
\newcommand{\var}[3]{{\rm Var}_{#1}^{#2}\!\left[\,#3\,\right]}
\newcommand{\cov}[4]{{\rm Cov}_{#1}^{#2}\!\left[\,#3\,,\,#4\,\right]}
\newcommand{\corr}[4]{{\rm Corr}_{#1}^{#2}\!\left[\,#3\,,\,#4\,\right]}
\DeclareMathOperator*{\argmax}{arg\,max}
\DeclareMathOperator*{\argmin}{arg\,min}
\title{Smile Modelling in Commodity Markets}
\author{
Emanuele Nastasi\thanks{Exprivia, {\tt emanuele.nastasi@exprivia.com}}
\ \ \
Andrea Pallavicini\thanks{Imperial College London and Banca IMI Milan, {\tt a.pallavicini@imperial.ac.uk}}
\ \ \
Giulio Sartorelli\thanks{Banca IMI Milan, {\tt giulio.sartorelli@bancaimi.com}}
}
\date{
\small First Version: October 30, 2017.  This version: \today
}
\begin{document}

\maketitle

\begin{abstract}

We present a stochastic-local volatility model for derivative contracts on commodity futures able to describe forward-curve and smile dynamics with a fast calibration to liquid market quotes. A parsimonious parametrization is introduced to deal with the limited number of options quoted in the market. Cleared commodity markets for futures and options are analyzed to include in the pricing framework specific trading clauses and margining procedures. Numerical examples for calibration and pricing are provided for different commodity products.

\end{abstract}

\bigskip

\noindent {\bf JEL classification codes:} C63, G13.\\
\noindent {\bf AMS classification codes:} 65C05, 91G20, 91G60.\\
\noindent {\bf Keywords:} Commodity, Option Pricing, Margining Procedures, Collaterals, Local volatility, Stochastic Volatility.

\newpage
{\small \tableofcontents}
\vfill
{\footnotesize \noindent The opinions here expressed  are solely those of the authors and do not represent in any way those of their employers.}
\newpage

\pagestyle{myheadings} \markboth{}{{\footnotesize  E. Nastasi, A. Pallavicini, G. Sartorelli, Smile Modelling in Commodity Markets}}

\section{Introduction}
\label{sec:introduction}

Futures are the most liquid commodity contracts, followed by option derivatives on futures prices. Plain vanilla options on a single futures price are usually liquid in commodity markets along with calendar spread options, namely spread options on two different futures. However, an increasing number of customized derivative contracts on futures prices are traded over the counter, or they are embedded within structured notes. Since such kind of contracts are usually sensitive to smile effects and include path-dependency, we need a pricing model for futures prices able to describe both curve and smile dynamics. Moreover, for practical purposes, we need a model which can be fast calibrated to futures prices and liquid plain-vanilla options.

Our approach is based on a stochastic-local volatility (SLV) model for futures prices. In the literature some contributions can be found connected to our work. We can cite among them the paper of~\cite{Pilz2011} where the authors introduce a different stochastic volatility model for each futures price along with with a stylized local volatility factor. Stochastic interest rates are considered to size convexity adjustments in futures prices. A different approach is followed in the work of~\cite{Chiminello2015} where a distinct local-volatility model for each futures is introduced. Yet, liquid market quotes for plain-vanilla options on futures prices are quoted only for one expiry date for each futures, so that synthetic market quotes are built from existing ones by means of heuristics to obtain a complete calibration set for each futures. In this model the futures curve dynamics can be modelled with different risk factors for each futures, while smile dynamics is not modelled. A more parsimonious approach is followed by~\cite{Albani2017} where a single local-volatility model for all futures is introduced, but futures curve dynamics and smile dynamics are not modelled. Calibration is performed with quotes observed on different dates to increase the number of market data.

In the present paper we introduce a SLV model able to reproduce futures prices and quoted options on futures, and to recover the price of some exotic option, e.g.\ calendar spread options, which are sensitive to curve and smile dynamics. In particular, we define as a first step a local-volatility one-factor process with affine drift to model future-price marginal densities, and we extend the Dupire equation, see~\cite{Dupire1994} and~\cite{Derman1994}, to allow the implementation of a fast and robust calibration to all quoted plain-vanilla options. The calibration algorithm is constructed as a fixed-point iteration following the accelerated Anderson scheme, see~\cite{Walker2011} and~\cite{Anderson1965}. Then, as second step we introduce a SLV dynamics for each futures price, so that we can model its curve and smile dynamics. In doing so we exploit the Gy\"ongy Lemma, see~\cite{Gyongy1986}, to match the future-price marginal densities calibrated in the first step. The present contribution focuses mostly on the local-volatility projection, while we leave to a future work the detailed analysis of stochastic parameters calibration and the discussion of exotic product pricing.

\medskip

The paper is organized as follows. In Section~\ref{sec:market} we describe the markets quoting commodity futures and options on futures. In particular, we discuss the trading clauses and the margining rules. Then, in Sections~\ref{sec:modelling} and~\ref{sec:calibration} we present the local-volatility part of the model and we describe how to calibrate futures prices and options on futures, while the extension of the model to a full SLV model is completed in Section~\ref{sec:smile}.

\section{Futures and Options on Futures in Commodity Markets}
\label{sec:market}

Liquid plain-vanilla options on commodity markets give the right to enter a futures contract at a given price. The decision to enter the contract can be done at any time up to option maturity, usually within a couple of years since trading trade. Option contracts are either margined as futures or collateralized as equity derivatives. We refer as an example to the documentation provided by ICE and CME markets\footnote{The Intercontinental Exchange (ICE, \url{https://www.intercontinentalexchange.com}) and the Chicaco Mercantile Exchange (CME, \url{http://www.cmegroup.com}) are two of the leading electronic trading platforms for commodities.}.

\subsection{Futures Contracts}

We wish to introduce a single-factor background process whose marginal probability densities could be mapped on futures densities so that we can parsimoniously price all liquid options. We term this process the ``fictitious spot'' price process since in our model its value plays the role of a spot price, even if it is not necessarily linked to an observable quantity. Once futures marginal probability densities are calibrated we can add curve and smile dynamics. Before doing so we need to understand the specifics of futures contracts.

In commodity markets futures contracts may lead to the physical delivery of the goods that has been purchased or sold in advance. After the first notification date the holder of a long position in the futures contract may be notified by the Exchange that the goods will be delivered. In order to avoid a physical delivery, the holder of the long position must roll the futures contract before the first notification date. We show in Figure~\ref{fig:market} some examples of trading and notification dates for futures contracts and options on futures. In detail we have that futures contracts are traded in the market between dates $T^f_0$ and $T^f_1$, but usually rolled before the first notification date $T^n_0$ to avoid delivery between dates $T^d_0$ and $T^d_1$. Notification of delivery may occur up to the last notification date $T^n_1$. Before date $T^n_0$ options are traded between dates $T^o_0$ and $T^o_1$. Futures quoted after the start delivery date $T^d_0$ refer to contracts with a shorter delivery period.

In the following, we assume that at time $t$ the prices $F_t^{\textrm{ID}}$ of a family of futures contracts each identified by a label ID are available in the market. A first example is given by Coffee C Futures on ICE\@. We show the time structure of the ID=''{\tt MAR18}'' futures and option contracts in the upper panel of Figure~\ref{fig:market}, where we plot both the time-line for futures contracts (in red) and for option on futures (in blue). Continuous color bands refers to the trading windows of each contract up to the first notification date. We can see that the first notification and the first delivery dates occurs well before the last trading date. The same is happening for Copper Futures on CME displayed in the mid panel again for {\tt MAR18} futures and option contracts. On the other hand for WTI Crude Oil Futures on CME notification and delivery may occur only after the last trading date, as shown in the third example in the lower panel of the figure where the time structure of {\tt MAR18} futures and option contracts are displayed.

\begin{remark} {\bf (Delivery Periods)}
Some commodity markets, such as natural gas or electricity, quote futures contracts on different set of delivery periods. In natural gas markets a futures contract allows to enter in a daily supply of gas for one month, a quarter or the whole year. Also options are traded on these different futures sets. In this markets we choose to model futures contracts with only one specific delivery period. We can focus on one-month delivery period contracts, and we can still price contracts with longer periods by implementing strategies on shorter contracts. However, in this way we cannot price contracts with shorter period. In these markets we can use option quotes on longer periods to get informations on forward curve dynamics.
\end{remark}

\begin{figure}
\begin{center}
\begin{tikzpicture}

\fill [mydarkred!25] (0.2,0) rectangle (8.0,0.5);
\fill [pattern=north west lines,pattern color=white] (3.5,0) rectangle (8.0,0.5);
\fill [mydarkblue!25] (0.8,-1.5) rectangle (2.5,-1.0);

\draw[dotted,line width=1pt,color=mydarkred!75] (3.5,0) -- (3.5,2) node [right] {{\footnotesize$F_{T_0^n}^{\tt MAR18}$}};
\draw[dotted,line width=1pt,color=mydarkblue!75] (2.5,-1.5) -- (2.5,-2.5) node [right] {{\footnotesize$\left(F_{T_1^o}^{\tt MAR18}-K\right)^{\!+}$}};

\draw[line width=2pt] (0,0) -- (1,0);
\draw[dotted,line width=2pt] (1,0) -- (2,0);
\draw[->,line width=2pt] (2,0) -- (10.5,0);

\draw[line width=2pt] (0,-1.5) -- (1,-1.5);
\draw[dotted,line width=2pt] (1,-1.5) -- (2,-1.5);
\draw[->,line width=2pt] (2,-1.5) -- (3.5,-1.5);

\draw (0.2,0) node [below] {{\footnotesize$\phantom{T^f_0}T^f_0$}} -- (0.2,-0.1) node [right,rotate=90,xshift=5pt,yshift=5pt] {{\footnotesize \scalebox{.7}[1.0]{01 Apr 2015}}};
\draw (8.0,0) node [below] {{\footnotesize$\phantom{T^f_0}T^f_1$}} -- (8.0,-0.1) node [right,rotate=90,xshift=5pt,yshift=5pt] {{\footnotesize \scalebox{.7}[1.0]{19 Mar 2018}}};

\draw (3.5,0) node [below] {{\footnotesize$\phantom{T^f_0}T^n_0$}} -- (3.5,-0.1) node [right,rotate=90,xshift=5pt,yshift=5pt] {{\footnotesize \scalebox{.7}[1.0]{20 Feb 2018}}};
\draw (8.5,0) node [below] {{\footnotesize$\phantom{T^f_0}T^n_1$}} -- (8.5,-0.1) node [right,rotate=90,xshift=5pt,yshift=0pt] {{\footnotesize \scalebox{.7}[1.0]{20 Mar 2018}}};

\draw (6.5,0) node [below] {{\footnotesize$\phantom{T^f_0}T^d_0$}} -- (6.5,-0.1) node [right,rotate=90,xshift=5pt,yshift=0pt] {{\footnotesize \scalebox{.7}[1.0]{01 Mar 2018}}};
\draw (9.5,0) node [below] {{\footnotesize$\phantom{T^f_0}T^d_1$}} -- (9.5,-0.1) node [right,rotate=90,xshift=5pt,yshift=0pt] {{\footnotesize \scalebox{.7}[1.0]{29 Mar 2018}}};

\draw (0.8,0) node [right,rotate=90,xshift=1.9pt,yshift=0pt] {{\footnotesize \scalebox{.7}[1.0]{02 Apr 2015}}} -- (0.8,-1.6) node [below] {{\footnotesize$\phantom{T^o_0}T^o_0$}};
\draw (2.5,0) node [right,rotate=90,xshift=1.9pt,yshift=0pt] {{\footnotesize \scalebox{.7}[1.0]{09 Feb 2018}}} -- (2.5,-1.6) node [below] {{\footnotesize$\phantom{T^o_0}\,T^o_1$}};

\end{tikzpicture}
\end{center}

\begin{center}
\begin{tikzpicture}

\fill [mydarkred!25] (0.2,0) rectangle (8.0,0.5);
\fill [pattern=north west lines,pattern color=white] (3.5,0) rectangle (8.0,0.5);
\fill [mydarkblue!25] (0.8,-1.5) rectangle (2.5,-1.0);

\draw[dotted,line width=1pt,color=mydarkred!75] (3.5,0) -- (3.5,2) node [right] {{\footnotesize$F_{T_0^n}^{\tt MAR18}$}};
\draw[dotted,line width=1pt,color=mydarkblue!75] (2.5,-1.5) -- (2.5,-2.5) node [right] {{\footnotesize$\left(F_{T_1^o}^{\tt MAR18}-K\right)^{\!+}$}};

\draw[line width=2pt] (0,0) -- (1,0);
\draw[dotted,line width=2pt] (1,0) -- (2,0);
\draw[->,line width=2pt] (2,0) -- (10.5,0);

\draw[line width=2pt] (0,-1.5) -- (1,-1.5);
\draw[dotted,line width=2pt] (1,-1.5) -- (2,-1.5);
\draw[->,line width=2pt] (2,-1.5) -- (3.5,-1.5);

\draw (0.2,0) node [below] {{\footnotesize$\phantom{T^f_0}T^f_0$}} -- (0.2,-0.1) node [right,rotate=90,xshift=5pt,yshift=5pt] {{\footnotesize \scalebox{.7}[1.0]{28 Mar 2013}}};
\draw (8.0,0) node [below] {{\footnotesize$\phantom{T^f_0}T^f_1$}} -- (8.0,-0.1) node [right,rotate=90,xshift=5pt,yshift=5pt] {{\footnotesize \scalebox{.7}[1.0]{27 Mar 2018}}};

\draw (3.5,0) node [below] {{\footnotesize$\phantom{T^f_0}T^n_0$}} -- (3.5,-0.1) node [right,rotate=90,xshift=5pt,yshift=5pt] {{\footnotesize \scalebox{.7}[1.0]{28 Feb 2018}}};
\draw (8.5,0) node [below] {{\footnotesize$\phantom{T^f_0}T^n_1$}} -- (8.5,-0.1) node [right,rotate=90,xshift=5pt,yshift=0pt] {{\footnotesize \scalebox{.7}[1.0]{28 Mar 2018}}};

\draw (6.5,0) node [below] {{\footnotesize$\phantom{T^f_0}T^d_0$}} -- (6.5,-0.1) node [right,rotate=90,xshift=5pt,yshift=0pt] {{\footnotesize \scalebox{.7}[1.0]{01 Mar 2018}}};
\draw (9.5,0) node [below] {{\footnotesize$\phantom{T^f_0}T^d_1$}} -- (9.5,-0.1) node [right,rotate=90,xshift=5pt,yshift=0pt] {{\footnotesize \scalebox{.7}[1.0]{29 Mar 2018}}};

\draw (0.8,0) node [right,rotate=90,xshift=1.9pt,yshift=0pt] {{\footnotesize \scalebox{.7}[1.0]{27 Apr 2016}}} -- (0.8,-1.6) node [below] {{\footnotesize$\phantom{T^o_0}T^o_0$}};
\draw (2.5,0) node [right,rotate=90,xshift=1.9pt,yshift=0pt] {{\footnotesize \scalebox{.7}[1.0]{22 Feb 2018}}} -- (2.5,-1.6) node [below] {{\footnotesize$\phantom{T^o_0}\,T^o_1$}};

\end{tikzpicture}
\end{center}

\begin{center}
\begin{tikzpicture}

\fill [mydarkred!25] (0.2,0) rectangle (3.5,0.5);
\fill [mydarkblue!25] (0.2,-1.5) rectangle (2.5,-1.0);

\draw[dotted,line width=1pt,color=mydarkred!75] (3.5,0) -- (3.5,2) node [right] {{\footnotesize$F_{T_1^f}^{\tt MAR18}$}};
\draw[dotted,line width=1pt,color=mydarkblue!75] (2.5,-1.5) -- (2.5,-2.5) node [right] {{\footnotesize$\left(F_{T_1^o}^{\tt MAR18}-K\right)^{\!+}$}};

\draw[line width=2pt] (0,0) -- (1,0);
\draw[dotted,line width=2pt] (1,0) -- (2,0);
\draw[->,line width=2pt] (2,0) -- (10.5,0);

\draw[line width=2pt] (0,-1.5) -- (1,-1.5);
\draw[dotted,line width=2pt] (1,-1.5) -- (2,-1.5);
\draw[->,line width=2pt] (2,-1.5) -- (3.5,-1.5);

\draw (0.2,0) node [below] {{\footnotesize$\phantom{T^f_0}T^f_0$}} -- (0.2,-0.1) node [right,rotate=90,xshift=5pt,yshift=5pt] {{\footnotesize \scalebox{.7}[1.0]{19 Nov 2012}}};
\draw (3.5,0) node [below] {{\footnotesize$\phantom{T^f_0}T^f_1$}} -- (3.5,-0.1) node [right,rotate=90,xshift=5pt,yshift=5pt] {{\footnotesize \scalebox{.7}[1.0]{20 Feb 2018}}};

\draw (4.5,0) node [below] {{\footnotesize$\phantom{T^f_0 \equiv T^f_1}T^n_0 \equiv T^n_1$}} -- (4.5,-0.1) node [right,rotate=90,xshift=5pt,yshift=0pt] {{\footnotesize \scalebox{.7}[1.0]{22 Feb 2018}}};

\draw (6.5,0) node [below] {{\footnotesize$\phantom{T^f_0}T^d_0$}} -- (6.5,-0.1) node [right,rotate=90,xshift=5pt,yshift=0pt] {{\footnotesize \scalebox{.7}[1.0]{01 Mar 2018}}};
\draw (9.5,0) node [below] {{\footnotesize$\phantom{T^f_0}T^d_1$}} -- (9.5,-0.1) node [right,rotate=90,xshift=5pt,yshift=0pt] {{\footnotesize \scalebox{.7}[1.0]{31 Mar 2018}}};

\draw (0.2,0) -- (0.2,-1.6) node [below] {{\footnotesize$\phantom{T^o_0}T^o_0$}};
\draw (2.5,0) node [right,rotate=90,xshift=1.9pt,yshift=5pt] {{\footnotesize \scalebox{.7}[1.0]{14 Feb 2018}}} -- (2.5,-1.6) node [below] {{\footnotesize$\phantom{T^o_0}\,T^o_1$}};

\end{tikzpicture}
\end{center}
\caption{Notification, trading and delivery dates for March 2018 futures along with trading dates for corresponding options. Upper panel Coffee Arabica on ICE, mid panel Copper on CME, lower panel WTI Crude Oil on CME.}
\label{fig:market}
\end{figure}
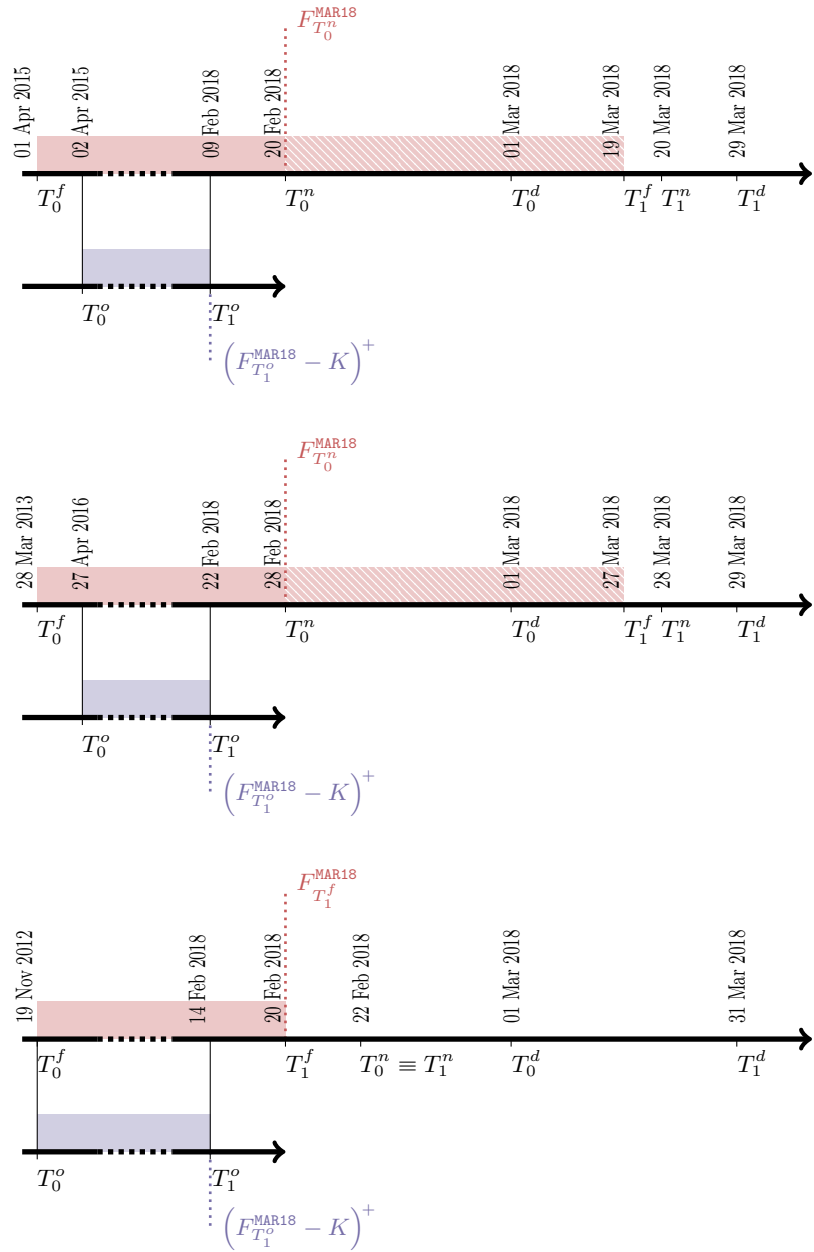

\subsection{Margining and Collateral Procedures}

Futures exchanges determine and set futures margin rates. Traders are required to post an initial margin at contract inception, then the margin in maintained to match the mark-to-market variations of the contract. The party posting cash or assets to fulfill the margin agreement is not remunerated for funding costs. The procedure is similar to what happens for collateralized contracts, but in this latter case the collateral taker pays an interest rate to other party, usually the over-night rate. Similar arguments hold also for options on futures. The presence or absence of the remuneration rate is crucial to set up the right pricing formula for these contracts.

Pricing in presence of margining or collateral procedures is widely discussed in the literature. Here, we refer to~\cite{BrigoMoriniPallavicini2012} and references therein. In particular, we assume the so-called perfect collateralization regime, namely the case when the margining procedure is able to remove any losses on the default event of the investor or the counterparty. We have the following proposition.

\begin{proposition} {\bf (Perfect collateralization)}\\
We assume that the market risks are described by a vector of It\^o processes. We consider a security with price process $V_t$ and cumulative dividend processes $\pi_t$ expressing the contractual coupons $\phi_i$ paid on date $T_i$ defined as
\begin{equation}
\label{eq:dumdiv}
\pi_t := \sum_{i=1}^N \phi_i \ind{T_i \le t}
\end{equation}
If the collateral process $C_t$ is defined so that $C_t = V_t$ for any time $t$ up to security maturity $T$, then we can write the security price process as
\begin{equation}
\label{eq:perfect}
V_t = \,\Ex{t}{\int_t^T D(t,u;e) \,d\pi_u}
\;,\quad
D(t,T;e) := \exp\left\{ - \int_t^T e_u \,du \right\}
\end{equation}%
where the expectation is taken under the risk-neutral measure, and $e_t$ is the collateral accrual-rate process.
\label{cor:perfect}
\end{proposition}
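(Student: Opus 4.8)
The plan is to set up a self-financing replication argument under the risk-neutral measure $\mathbb{Q}$ and then reduce the pricing identity to a martingale statement after an integrating-factor change of discount rate. Throughout I work under $\mathbb{Q}$ associated with the risk-free bank account $B_t = \exp\{\int_0^t r_u\,du\}$, and I use the standing assumption that the market risks are It\^o processes to guarantee that the contract is attainable, so that a dynamic hedge in the underlying risk factors exists and no-arbitrage pins down the drift of $V_t$. The target is to show that the $e$-deflated value augmented by accumulated deflated coupons,
\[
D(0,t;e)\,V_t + \int_0^t D(0,u;e)\,d\pi_u,
\]
is a $\mathbb{Q}$-martingale; the claim then follows by projecting onto $\mathcal{F}_t$ and using the terminal condition.

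The core step is to identify the risk-neutral drift of the collateralized position. First I would form the hedger's self-financing book consisting of the contract (total return $dV_t + d\pi_t$), a replicating position in the risk factors, the residual risk-free funding account, and the collateral account with balance $C_t$. The only nonstandard entry is the collateral carry: the posted cash is funded or reinvested at $r_t$ but accrues at the collateral rate $e_t$, so it contributes a net $(r_t - e_t)\,C_t\,dt$ to the book, with the sign bookkeeping arranged so that it is the same whether the hedger posts or receives, i.e.\ whether $V_t$ is a liability or an asset. Imposing that the fully hedged, self-financing book has zero drift under no-arbitrage and inserting the perfect-collateralization constraint $C_t = V_t$, the risk-free rate cancels between the hedge funding and the collateral carry, leaving
\[
\Ex{t}{dV_t + d\pi_t} = e_t\,V_t\,dt,
\]
equivalently $dV_t = e_t V_t\,dt - d\pi_t + dM_t$ for some $\mathbb{Q}$-martingale $M_t$ carrying the hedged risk. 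In words, \emph{perfect collateralization replaces the funding rate of the position by the collateral accrual rate $e_t$}.

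Next I would apply the integrating factor $D(0,t;e)$. By It\^o's product rule the $e_t V_t\,dt$ drift is exactly annihilated, giving
\[
d\bigl(D(0,t;e)\,V_t\bigr) = -\,D(0,t;e)\,d\pi_t + D(0,t;e)\,dM_t,
\]
so that $D(0,t;e)V_t + \int_0^t D(0,u;e)\,d\pi_u$ has zero drift and is a $\mathbb{Q}$-martingale, subject to the usual integrability of the stochastic integral against $M$. Projecting this martingale from $t$ to $T$, discarding the coupons already accrued before $t$, and using the terminal condition $V_T = 0$ (all coupons $\phi_i$ have been paid by maturity, since $\pi_t$ is constant after $T_N \le T$), I obtain
\[
D(0,t;e)\,V_t = \Ex{t}{\int_t^T D(0,u;e)\,d\pi_u},
\]
and dividing by $D(0,t;e)$ together with the identity $D(t,u;e) = D(0,u;e)/D(0,t;e)$ yields \eqref{eq:perfect}.

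The step I expect to be the main obstacle is the second one: the careful accounting of the collateral cash flows and their funding, i.e.\ showing rigorously that the net carry of the perfectly collateralized book is $e_t V_t$ rather than $r_t V_t$. This is where the collateral mechanics and the sign conventions for poster versus taker must be handled consistently, and where the distinction emphasized in Section~\ref{sec:market} --- that margined collateral is unremunerated while collateralized exposures accrue at the over-night rate --- enters the formula through $e_t$. By contrast, the integrating-factor manipulation and the final conditional-expectation identity are routine; the only technical care needed there is to treat $d\pi_u$ as a Stieltjes integrator so that the coupon jumps at the dates $T_i$ are absorbed correctly, and to assume enough integrability for the $\mathbb{Q}$-martingale property to hold up to $T$.
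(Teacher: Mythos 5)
Your argument is correct in substance, but it takes a different route from the paper: the paper does not prove the proposition from scratch at all --- it simply invokes Corollary 2.1 of the cited multi-currency collateralization framework (Moreni and Pallavicini) and specializes it by setting all payment currencies equal, so the heavy lifting (the hedging book, the collateral carry, the drift identification) is delegated to that reference. What you do instead is reconstruct the underlying single-currency replication argument directly: build the self-financing collateralized book, observe that the net collateral carry is $(r_t - e_t)C_t\,dt$, impose $C_t = V_t$ so that the funding rate cancels and the position accrues at $e_t$, i.e.\ $\Ex{t}{dV_t + d\pi_t} = e_t V_t\,dt$, and then apply the integrating factor $D(0,t;e)$ and the terminal condition $V_T=0$ to read off \eqref{eq:perfect}. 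This is the standard Piterbarg-style derivation and it is sound modulo the usual caveats you yourself flag (attainability of the claim so the drift is pinned down, integrability of the martingale part, and Stieltjes treatment of the coupon jumps). The trade-off is clear: the paper's citation buys brevity and inherits the general multi-currency, defaultable setting for free, at the cost of opacity; your direct argument is self-contained and makes transparent exactly where the collateral rate $e_t$ replaces the risk-free rate, which is precisely the economic point the surrounding text (margined versus collateralized contracts) relies on. The one place where your sketch would need to be fleshed out to be a complete proof is the cash-flow accounting of the collateral leg with consistent sign conventions for poster and taker --- you correctly identify this as the main obstacle, and it is indeed the step the cited corollary formalizes.
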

\begin{proof}
The proposition can be derived from Corollary 2.1 in~\cite{Moreni2017} by assuming that all the payment currencies are the same.
\end{proof}

We are now ready to write the pricing equations for options on futures. We start by describing plain-vanilla options without early-exercise features (European options).  They can be traded either under a margin or a collateral agreement, usually termed by futures exchanges respectively future-style and equity-style margining. Under future-style margining the option premium is paid on expiry date, mark-to-market variations are exchanged on a daily basis along with initial margins without any remuneration, so that we can write the option price according to Proposition~\ref{cor:perfect} as given by
\begin{equation}
C^{\rm marg}_t(T^o_1,\textrm{ID},K) := \Ex{t}{\left(F_{T_1^o}^{\textrm{ID}} - K\right)^{\!+}}.
\label{eq:optmarg}
\end{equation}%
On the other hand, under equity-style margining the option premium is paid on trade date, the option seller must post a guarantee for the premium (collateral accruing at rate $e$) which is adjusted on a daily basis along with initial margins, so that we can write the option price according to Proposition~\ref{cor:perfect} as given by
\begin{equation}
C^{\rm coll}_t(T^o_1,\textrm{ID},K) := \Ex{t}{\left(F_{T_1^o}^{\textrm{ID}}-K\right)^{\!+} D(t,T^p_1;e)},
\label{eq:optcoll}
\end{equation}%
where $T^p_1$ is the payment date and usual settlement lags apply. In case of early-exercise we should adjust our pricing formula, as done for instance in~\cite{Albani2017}. However, for non-dividend paying assets American option prices may be different from corresponding European prices only because of interest-rates. In the case of future-style margining we can see from Equation~\eqref{eq:optmarg} that option prices do not depend on interest rates, since the discount factor is missing and the underlying asset is a futures price. Thus, we can conclude that it is never convenient to exercise the option before maturity, namely the prices of European and American options are the same when they are traded under future-style margining agreements, and so adjustments are not required. On the other hand, options traded with equity-style margining require adjustments, although in low interest-rate regime and for short maturities they should be small. If adjustments are required we can easily include them by following~\cite{DeMarco2016} which extends the Dupire equation to American options.

\section{Modelling Futures Prices}
\label{sec:modelling}

We follow a step-by-step procedure to implement our modelling framework. We start by introducing a local-volatility linear model for a background process we name the ``fictitious spot price''. Linear models are characterized by an affine drift term, see for instance~\cite{Ackerer2016a}. In commodity energy markets a constant-volatility version of linear models is studied in~\cite{Swishchuk2008}. Such models allow to calculate futures prices in closed form. Moreover, in Section~\ref{sec:calibration} we will show that option prices may be calculated by an extended version of the Dupire equation, which we can employ to calibrate the local volatility function.

\subsection{Dealing with the Delivery Procedure}

In the mathematical representation of equity futures markets the underlying of the futures contract is a spot contract which is usually fairly liquid. Standard assumptions on absence of arbitrage allow to obtain that an equivalent probability measure exists such that the futures price process is a martingale, and its value coincides with the spot price process value at the futures expiry date.

In commodity markets the futures contracts give the right to exchange an underlying physical commodity for an amount of money at expiry, but the delivery procedure is non-trivial due to the physical nature of the underlying, so that it cannot be summarized as ``the delivery of an amount of commodity at one date''. The delivery procedure can take several days according to rules allowing for optional choices by the counterparties, leading to a ``spot'' contract whose valuation can be difficult to perform. This prevents us from using the same arguments used in the equity case, nevertheless we can overcome such problems if the final purpose of the model is only valuing derivative contracts on futures prices, by taking into account the availability of the futures themselves as hedging instruments. In this case we only need an equivalent measure such that the futures price process is a martingale up to any date before the delivery procedure may begin, namely before the first notification date $T_0^n$. Indeed, we are quite close to reality if we say that there is no optionality granted by futures contracts, beyond that of buying and selling the futures itself, so absence of arbitrage implies that there is an equivalent measure such that at each day up to the first notification date the futures price is the expected value of the futures price at the next day, if this comes before the end of trading at $T_1^f$. After that, various optionalities may ensue, in many cases making it dubious that the futures price process is still a martingale with respect to any equivalent measure. Hence, for any time $t\leq T^{\textrm{last}}$ with $T^{\textrm{last}} = \min\{ T_0^n, T_1^f\}$ we can write
\[
F_t^{\textrm{ID}}= \Ex{t}{ F_{T^{\textrm{last}}}^{\textrm{ID}}}
\]%
and we can assume the existence of a single continuous-time process $S_t$, which we term \emph{fictitious spot process}, whose value coincides with futures contract prices on the $T^{\textrm{last}}$ date of each futures contract, so that
\[
F_t^{\textrm{ID}} = \Ex{t}{S_{T^{\textrm{last}}}},
\quad
t \leq T^{\textrm{last}}
\]%
We do not require that $S_t$ is traded on the market. If the market quotes a spot price, it can be different from our definition, for instance due to particular delivery conditions.

Moreover, even though only futures for a finite number of expiries are observable and tradable on the market, we may postulate the existence of a whole curve $T \mapsto F(T)$, parameterized by the date $T$ at which the futures price coincides with the fictitious spot price:
\[
F_T(T) = S_T
\]%
for traded futures, in this setting we have:
\[
F_{T^{\textrm{last}}}(T^{\textrm{last}}) = F_{T^{\textrm{last}}}^{\textrm{ID}} = S_{T^{\textrm{last}}}
\]%

\subsection{Modelling the Fictitious Spot Price}

We wish to select a model for the fictitious spot price $S_t$ which allows us to easily calculate the forward prices, but, at the same time, it is flexible enough to reproduce all option quotes. Thus, we choose a local-volatility linear model for the fictitious spot price $S_t$.
\begin{equation}
dS_t = \left( \alpha(t) + \beta(t) S_t \right) dt + \eta_S(t,S_t) S_t \,dW_t
\;,\quad
S_0 = {\bar S}
\label{eq:spot}
\end{equation}%
where $W$ is a standard Brownian motion under risk-neutral measure, $\alpha$ is a positive function of time, $\beta$ is a function of time, $\eta_S$ is Lipschitz, bounded and positive, ${\bar S}$ is the (positive) initial value of the fictitious spot price, which is actually irrelevant in the modeling of futures prices. With these assumptions the previous SDE has a unique positive solution for any time $t>0$ and finite moments of all orders.

Our strategy is to calibrate $\beta$ to futures prices and $\eta_S$ to plain-vanilla option prices, while $\alpha$ can be derived from calendar spread options, if their quotes are liquid in the market, or by means of heuristic arguments. The specific form of the dynamics of the fictitious spot price is selected to perform in an effective way these tasks.

In the following we find easier to work with a normalized version of the spot price, so we define
\begin{equation}
s_t := \frac{S_t}{F_0(t)}
\label{eq:fictitious}
\end{equation}%
whose dynamics is
\begin{equation}
ds_t = \left( a(t) + (\beta(t) - \partial_t \ln F_0(t)) s_t \right) dt + \eta(t,s_t) s_t \,dW_t
\;,\quad
s_0 = 1
\label{eq:normspot}
\end{equation}%
where the coefficients are defined as
\Eq{
a(t) := \frac{\alpha(t)}{F_0(t)} > 0
\;,\quad
\eta(t,s_t) := \eta_S(t,s_tF_0(t))
}%
and time $0$ represents when we perform the calibration procedure.

\begin{remark} {\bf (Speed of Mean Reversion)}
We introduce a speed of mean reversion $a$ since we notice, by practical investigation, that a fictitious spot process with $a=0$ may fail to calibrate option market data. Indeed, the existence of a fictitious spot price is only a model assumption, and it is not driven by non-arbitrage pricinciples as, for instance, in the Equity market case. In the following, when we discuss the calibration procedure, we have to deal with two unknown parameters: the local-volatility function $\eta$ and the speed of mean reversion $a$. In the following we will calibrate the local volatility function to plain-vanilla options, while we will discuss at the end of this section how to fix the mean-reversion speed.
\label{remark:meanrev}
\end{remark}

\subsection{Automatic Calibration of Futures Prices}

We proceed with the calibration of futures prices. They can be automatically recovered if we properly define the function $\beta$ in the spot dynamics.

As a first step we re-write Equation~\eqref{eq:normspot} in integral form.
\Eq{
s_T = 1 + \int_0^T \left( a(u) + (\beta(u) - \partial_u \ln F_0(u)) s_u \right) du + \int_0^T \eta(u,s_u) s_u \,dW_u
}%
We can take the expectation under risk-neutral measure conditioning at any time $t\in[0,T)$ to obtain
\Eq{
\Ex{t}{s_T} = 1 + \int_0^T \left( a(u) + (\beta(u) - \partial_u \ln F_0(u)) \Ex{t}{s_u} \right) du
}%
Then, if we use the definition of futures prices in term of the fictitious spot price we get
\Eq{
\frac{F_t(T)}{F_0(T)} = 1 + \int_0^T \left( a(u) + (\beta(u) - \partial_u \ln F_0(u)) \frac{F_t(u)}{F_0(u)} \right) du
}%
Hence, we can take the derivative w.r.t.\ maturity time $T$ to write a first order ODE for futures prices
\begin{equation}
\partial_T F_t(T) = a(T) F_0(T) + \beta(T) F_t(T)
\;,\quad
F_t(t) = s_t F_0(t)
\label{eq:fwdode}
\end{equation}%

In particular, for $t=0$ we can re-arrange Equation~\eqref{eq:fwdode} to obtain an explicit expression for $\beta(t)$ that we can substitute in the spot dynamics, namely we get
\Eq{
\beta(T) = \partial_T \ln F_0(T) - a(T)
}%
leading to the following dynamics for normalized spot prices
\begin{equation}
ds_t = a(t) (1 - s_t) \,dt + \eta(t,s_t) s_t \,dW_t
\;,\quad
s_0 = 1
\label{eq:normspot_calib}
\end{equation}%
In this way we obtain that, if the fictitious spot price follows the above dynamics, the futures prices are exactly recovered by our model, since from Equation~\eqref{eq:fictitious} we have
\Eq{
S_t = F_0(t) s_t \;\Longrightarrow\; \Ex{0}{S_t} = F_0(t) \Ex{0}{s_t} = F_0(t)
}%
where the expectation on the right hand side can be easily computed given the dynamics of $s_t$ described in Equation~\eqref{eq:normspot_calib}.

We conclude this section by explicitly solving the first-order ODE described by Equation~\eqref{eq:fwdode}, so that we can write the dynamics followed by the futures prices. We will use this result in the following sections. We obtain
\begin{equation}
F_t(T) = F_0(T) \left( 1 - (1-s_t) e^{-\int_t^T a(u) \,du} \right)
\label{eq:fwdspot}
\end{equation}%
Then, by differentiating w.r.t.\ to time $t$ we get
\begin{equation}
dF_t(T) = \eta_F(t,T,F_t(T)) \,dW_t
\label{eq:fwd}
\end{equation}%
where the local volatility of futures prices can be defined by means of a proper remapping of the local volatility of the spot price. Indeed, we get
\begin{equation}
\eta_F(t,T,K) := \left( K - F_0(T) \left( 1 - e^{-\int_t^T a(u) \,du} \right) \right) \eta(t,k_F(t,T,K))
\label{eq:etaF}
\end{equation}%
while the the effective strike $k_F$ can be defined as
\begin{equation}
k_F(t,T,K) := 1 - e^{\int_t^T a(u) \,du} \left(1 - \frac{K}{F_0(T)}\right)
\label{eq:effK}
\end{equation}%

\section{Calibration of Futures Option Smile}
\label{sec:calibration}

The last step of the calibration procedure for the fictitious spot price consists in finding the local volatility function $\eta$ able to recover the prices of options on futures. We wish to achieve a fast calibration procedure, so that we search for an extension of the Dupire equation for our model, which allows us to price all plain-vanilla options by a single PDE evaluation.

\subsection{Extended Dupire Equation}

Here, we focus on European options, since the discussion of Section~\ref{sec:market} on early-exercise options showed that future-style American options have the same price as European options, while equity-style options can be calculated as in~\cite{DeMarco2016} by a proper modification of the Dupire equation, which is compatible with our extension.

\begin{proposition} {\bf (Extended Dupire Equation)}\\
We assume that the normalized spot price $s_t$ follows the dynamics
\Eq{
ds_t = a(t) (1 - s_t) \,dt + \eta(t,s_t) s_t \,dW_t
\;,\quad
s_0 = 1
}%
where the mean-reversion speed $a$ is a positive function of time, and the local-volatility function $\eta$ is Lipschitz, bounded and positive. Then, the normalized call price
\Eq{
c(t,k) := \Ex{0}{\left(s_t - k\right)^{\!+}}
}%
satisfies the following parabolic PDE
\begin{equation}
\partial_t c(t,k) = \left( - a(t) - a(t) (1-k) \,\partial_k + \frac{1}{2} k^2 \eta^2(t,k) \,\partial^2_k \right) c(t,k)
\label{eq:dupire}
\end{equation}%
with boundary conditions
\Eq{
c(t,0) = 1
\;,\quad
c(t,\infty) = 0
\;,\quad
c(0,k) = (1-k)^{\!+}
}%
\label{prop:dupire}
\end{proposition}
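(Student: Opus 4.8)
The plan is to apply the Tanaka--Meyer formula to the convex payoff $(s_t-k)^+$ and then differentiate the resulting identity in the maturity variable $t$, exactly as in the classical derivation of Dupire's equation but carrying along the affine drift $a(t)(1-s_t)$. First I would record the regularity I intend to use. Since $\eta$ is Lipschitz, bounded and positive and the drift in \eqref{eq:normspot_calib} is affine, the equation has a unique strong, strictly positive solution with finite moments of all orders, as already noted after \eqref{eq:spot}. Because $\eta$ is bounded below away from zero on bounded sets, the diffusion coefficient $k\mapsto k\,\eta(t,k)$ is non-degenerate on $(0,\infty)$, so $s_t$ admits a density $p(t,\cdot)$ there. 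From $c(t,k)=\Ex{0}{(s_t-k)^+}=\int_k^\infty (x-k)\,p(t,x)\,dx$ I then read off the standard identities $\partial_k c(t,k)=-\Ex{0}{\ind{s_t>k}}$ and $\partial_k^2 c(t,k)=p(t,k)$, the differentiation under the expectation being justified by dominated convergence using the finite first moment.

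Next, for fixed $k$ the Tanaka--Meyer formula gives
\begin{equation*}
(s_t-k)^+ = (1-k)^+ + \int_0^t \ind{s_u>k}\,ds_u + \onehalf L_t^k,
\end{equation*}
where $L^k$ is the local time of $s$ at the level $k$. Taking $\Ex{0}{\cdot}$ and using that the stochastic-integral part of $ds_u=a(u)(1-s_u)\,du+\eta(u,s_u)s_u\,dW_u$ is a true martingale (its integrand has finite second moment by the moment bounds), hence has zero expectation, I obtain
\begin{equation*}
c(t,k) = (1-k)^+ + \int_0^t a(u)\,\Ex{0}{\ind{s_u>k}(1-s_u)}\,du + \onehalf\,\Ex{0}{L_t^k}.
\end{equation*}

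I then differentiate this identity in $t$. For the local-time term the occupation-time formula yields $\Ex{0}{L_t^k}=\int_0^t k^2\eta^2(u,k)\,p(u,k)\,du$, so $\partial_t\Ex{0}{L_t^k}=k^2\eta^2(t,k)\,p(t,k)=k^2\eta^2(t,k)\,\partial_k^2 c(t,k)$, which reproduces the second-order coefficient. For the drift term I combine $\Ex{0}{\ind{s_t>k}}=-\partial_k c$ with $\Ex{0}{\ind{s_t>k}s_t}=\int_k^\infty x\,p(t,x)\,dx=c(t,k)-k\,\partial_k c(t,k)$ to get $\Ex{0}{\ind{s_t>k}(1-s_t)}=-c(t,k)-(1-k)\,\partial_k c(t,k)$. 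Multiplying by $a(t)$ and adding the local-time contribution gives precisely \eqref{eq:dupire}. The auxiliary conditions follow at once: $c(0,k)=(1-k)^+$ by definition, $c(t,\infty)=0$ by dominated convergence, and $c(t,0)=\Ex{0}{s_t}=1$, the last because taking expectations in \eqref{eq:normspot_calib} shows that $m(t):=\Ex{0}{s_t}$ solves $m'(t)=a(t)(1-m(t))$ with $m(0)=1$, hence $m\equiv 1$.

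The main obstacle is analytic rather than algebraic: making rigorous the identification of the expected local time with the density term, and the differentiation under the expectation, up to the degenerate boundary $k=0$, where the diffusion coefficient $k\,\eta(t,k)$ vanishes. I would handle this by working first on compact subsets of $(0,\infty)$, where uniform ellipticity yields a smooth transition density and the occupation-time formula applies directly, and then controlling the limit $k\downarrow 0$ by means of the finite moments and the strict positivity of $s_t$. Alternatively, the same PDE can be produced by testing the Fokker--Planck equation for $p(t,\cdot)$ against the payoff $(x-k)^+$ and integrating by parts twice in $x$, which bypasses the local time entirely at the cost of assuming sufficient regularity of $p$; the two boundary terms vanish by the factor $(x-k)$ at $x=k$ and by decay at infinity, leaving exactly the drift and diffusion contributions computed above.
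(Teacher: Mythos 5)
Your proof is correct and follows essentially the same route as the paper: Tanaka--Meyer applied to $(s_t-k)^+$, identification of the expected local time with $k^2\eta^2(t,k)\,\partial_k^2 c(t,k)$, and reduction of the drift contribution to $-c-(1-k)\,\partial_k c$. The only cosmetic difference is that you evaluate the drift term directly from the identities $\Ex{0}{\ind{s_t>k}}=-\partial_k c$ and $\Ex{0}{\ind{s_t>k}\,s_t}=c-k\,\partial_k c$, whereas the paper writes it as $\int_k^\infty(1-x)\,\partial_k^2 c(t,x)\,dx$ and integrates by parts twice, invoking the fast decay of call prices at large strikes to discard the boundary term.
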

\begin{proof}
We can proceed by applying the Meyer-$\!$Tanaka Formula to derive the dynamics of the call price, see for instance~\cite{protter}, and we write
\Eq{
c(t,k) = c(0,k) + \Ex{0}{ \int_0^t  \ind{s_t>k} ds_u } + \frac{1}{2} \,\Ex{0}{ L^s_t(k) }
}%
where the local time $L^s_t(k)$ for the process $s_t$ at level $k$ is defined as
\Eq{
L^s_t(k) := \lim_{\epsilon\rightarrow 0^+} \frac{1}{2\epsilon} \int_0^t \ind{k-\epsilon \le s_u \le k+\epsilon} \,d\langle s \rangle_u
}%
Then, following the results of~\cite{Bentata2015}, we can expand the expectations in term of integrals over the risk-neutral price density $p_{s_u}(x)$, so that we can write
\Eq{
c(t,k) = c(0,k) + \int_0^t a(u) \int_k^\infty  p_{s_u}(x) (1-x) dx\,du + \frac{1}{2} k^2  \int_0^t  p_{s_u}(k) \eta^2(u,k) \,du
}%
On the other hand, we have $p_{s_u}(k) = \partial^2_k c(u,k)$. Thus, by direct substitution and by differentiating w.r.t.\ time $t$ we get
\Eq{
\partial_t c(t,k) = a(t) \int_k^\infty  (1-x) \,\partial^2_k c(t,x) \,dx + \frac{1}{2} k^2 \eta^2(t,k) \,\partial^2_k c(t,k)
}%
We can simplify the equation by integrating by parts twice the first term on the righ-hand side. Indeed, we have
\Eq{
\int_k^\infty  (1-x) \,\partial^2_k c(t,x) \,dx = \left. \left( c(t,x) + (1-x) \partial_k c(t,x) \right) \right|_k^\infty = - c(t,k) - (1-k) \partial_k c(t,k)
}%
where the last equality holds since the finiteness of all moments implies that call prices decrease at large strike faster than a polynomial, see~\cite{Lee2004}. If we assemble the results we get the proposition.

\end{proof}

A similar result can be found in~\cite{Deelstra2013} for long-dates FX options, but here we can go further on by exploiting the affine form of the drift to remove the integral term, which allows to solve Equation~\eqref{eq:dupire} by means of the implicit PDE discretization method in a fast and efficient way as usually done for the standard Dupire equation.

Once the normalized call prices are calculated, we can also derive the prices of options on futures. Indeed, we get by direct substitution the price of future-style options
\begin{equation}
C^{\rm marg}_0(t,T,K) = F_0(T) e^{-\int_t^T a(u) \,du} c(t,k_F(t,T,K))
\label{eq:fwdopt_marg}
\end{equation}%
and of equity-style options
\begin{equation}
C^{\rm coll}_0(t,T,K) = P_0(T_p;e) F_0(T) e^{-\int_t^T a(u) \,du} c(t,k_F(t,T,K))
\label{eq:fwdopt_coll}
\end{equation}%
where the effective strike $k_F$ is defined in Equation~\eqref{eq:effK}.

\subsection{Calibration of the Local Volatility Function}

There is a huge literature on the calibration of local volatility models by using the Dupire equation, see for instance~\cite{gatheral2006volatility} and references therein. Here, we stress the need of a fast and robust calibration procedure to allow practical applications on trading desks. In particular, we may solve Equation~\eqref{eq:dupire} with two different approaches: (i) given the prices of options on futures from the market, we can use Equations~\eqref{eq:fwdopt_marg} and~\eqref{eq:fwdopt_coll} to obtain the corresponding normalized call prices, then we can plug these prices in Equation~\eqref{eq:dupire} and solve for the local volatility function, or (ii) we can solve a best-fit problem to find the local-volatility function that, plugged in the PDE~\eqref{eq:dupire}, gives the proper normalized call prices to be put in Equations~\eqref{eq:fwdopt_marg} and~\eqref{eq:fwdopt_coll} to get the market prices.

The first approach seems to be much faster than the second one since it does not require a best-fit procedure based on a PDE solver. Yet, it needs an arbitrage-free interpolation scheme to calculate partial derivatives from market prices in Equation~\eqref{eq:dupire}. A precise and robust definition of such scheme may result in slow algorithms and it may develop instabilities due to the small value of the partial derivatives in very out- or in-the-money strike regions. Here, we follow the second approach and we must face the problem of limiting the number of times we re-evaluate the PDE within the best-fit procedure.

We consider a set of options on futures prices quoted in term of Black-Scholes volatilities which we identify as
\Eq{
\{\sigma_F^{\rm mkt}(t_i,\textrm{ID}_i,K_i^j) : i\in[1,N], j\in[1,M_i]\}
}%
We notice that strike prices may depend on maturities since the market usually quotes options in term of Black-Scholes $\Delta$ leading to a different set of strikes for each maturity.  Using~\eqref{eq:fwdopt_marg} or~\eqref{eq:fwdopt_coll} we can transform market quotes of options on futures into volatilities of European options on the normalized spot price
\Eq{
\{\sigma^{\rm mkt}(t_i,k_F(t_i,T_i^{\text{last}},K_i^j)) : i\in[1,N], j\in[1,M_i]\}
}%
where we define
\Eq{
\sigma^{\rm mkt}_{ij} := \sigma^{\rm mkt}(t_i,k_F(t_i,T_i^{\text{last}},K_i^j))
}%
Adjustments for American early-exercise features may be included when options are equity-style collateralized. Model calibration to options on futures prices can be achieved by specifying a particular non-parametric form for the local volatility
\Eq{
\eta(t,k) := \zeta(t,k;\{\eta(t_i,k_F(t_i,T_i^{\text{last}},K_i^j))\})
\;,\quad
\eta_{ij} := \eta(t_i,k_F(t_i,T_i^{\text{last}},K_i^j))
}%
where $\zeta$ is a function used to interpolate and extrapolate the local-volatility function. We select a cubic monotone spline for interpolation, and constant function for extrapolation.

The best-fit problem has as free parameters the nodes of the local-volatility function, so that we have as many parameters as market volatilities, usually up to 100 or more parameters. Calibration procedures based on gradient-based optimization algorithms may suffer of poor performances, since the Jacobian of the objective function is always calculated in all directions. If we have a good knowledge of the dynamics, for instance by using asymptotic relationships between local and implied volatilities for small time-to-maturities and near-ATM strikes, we could guess the optimal parameters once we know the mismatch between target and model implied quantities. A similar approach is also discussed in~\cite{Reghai2012} where the level of the local-volatility function is iteratively updated by the calibration procedure.

In the next section we calculate the level and the skew of the local volatility function in term of the model implied volatilities, and we use them to implement our iterative calibration strategy.

\subsection{Iterative Calibration Strategy}

We can relate the price of plain vanilla options obtained with dynamics~\eqref{eq:normspot} and the price obtained in a Black framework. We introduce the Black formula $c^{\rm BS}(t,k,\sigma)$, depending on time-to-maturities $t$, strikes $k$ and volatilities $\sigma$, as given by
\begin{equation}
c^{\rm BS}(t,k,\sigma) := \Phi(y+\sigma\sqrt{t})-k\Phi(y)
\;,\quad
y := -\frac{1}{\sigma\sqrt{t}}\log k -\frac{1}{2}\sigma \sqrt{t}
\label{eq: black}
\end{equation}%
where $\Phi$ is the cumulative Gaussian distribution, and all prices are normalized w.r.t.\ the forward price. Then, we can write
\Eq{
c^{\rm BS}(t,k,\sigma(t,k)) := c(t,k)
}%
as a definition for the implied Black volatility $\sigma(t,k)$ for $t>0$.

On the other hand, the extended Dupire Equation~\eqref{eq:dupire} obtained in Proposition~\ref{prop:dupire} can be solved for the local volatility function $\eta$ as given by
\begin{equation}
\eta^2(t,k) = 2 \,\frac{ (\partial_t + a(t) + a(t)(1-k)\,\partial_k) c^{\rm BS}(t,k,\sigma(t,k))}{ k^2 \,\partial^2_k c^{\rm BS}(t,k,\sigma(t,k)) }
\label{eq:locvol}
\end{equation}%
where the call price is now calculated as a Black price.

Thanks to the above relationship we can compute the level and skew of the local volatility function in term of the model implied volatilities. In particular, we are interested in the limit of small maturities and strike prices around the at-the-money value. The explicit dependence of the local volatility on the model implied volatility is obtained by computing the right-hand side of~\eqref{eq:locvol} by means of the chain rule and then taking the leading terms. In order to do so we have to extend the results of~\cite{Berestycki2002} to mean-reverting local volatility processes, which ensure the existence of the limit of the implied volatility function as the time to maturity approaches zero. We have the following proposition.

\begin{proposition}{\bf (Level and Skew of the Local Volatility Function)}
If the normalized spot price follows the dynamics of Proposition~\ref{prop:dupire}, the following limit exists:
\begin{equation}
\sigma(0,k) := \lim_{t \rightarrow 0} \sigma(t,k) = \log k \left( \int_1^k \frac{dx}{x \eta(0,x)} \right)^{-1}
\label{eq:harmonic}
\end{equation}
independently of the value of the mean-reversion speed. Moreover, the level and the skew of the local volatility function in the limit of zero time-to-maturity and at-the-money strike are related to the level and the skew of the implied Black volatility as given by
\begin{equation}
\eta(0,1) = \sigma(0,1)
\;,\quad
\partial_k \eta(0,1) = 2 \partial_k \sigma(0,1)
\label{eq:asymptotics}
\end{equation}
\label{prop:asymptotics}
\end{proposition}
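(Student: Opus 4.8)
The plan is to prove the statement in two stages: first the short-maturity limit~\eqref{eq:harmonic}, which is the substantive part and is exactly where the announced extension of~\cite{Berestycki2002} to the mean-reverting case enters, and then the level and skew identities~\eqref{eq:asymptotics}, which I would obtain by a routine Taylor expansion of~\eqref{eq:harmonic} about the at-the-money point $k=1$. For the first stage I would work from~\eqref{eq:locvol}, inserting the Black representation $c(t,k)=c^{\rm BS}(t,k,\sigma(t,k))$ into the extended Dupire equation~\eqref{eq:dupire}. Expanding $\partial_t$ and $\partial_k$ by the chain rule produces the explicit Black partials together with derivatives of $\sigma$; these can be reorganized using the Black-Scholes Dupire identity $\partial_t c^{\rm BS}=\tfrac12\sigma^2 k^2\partial_k^2 c^{\rm BS}$ at fixed $\sigma$, together with the standard relations among the Black Greeks (the Vega $c^{\rm BS}_\sigma=k\,\phi(y)\sqrt t$ with $\phi:=\Phi'$ and $y=-\log k/(\sigma\sqrt t)-\tfrac12\sigma\sqrt t$, and the Vanna and Volga ratios $c^{\rm BS}_{k\sigma}/c^{\rm BS}_\sigma$ and $c^{\rm BS}_{\sigma\sigma}/c^{\rm BS}_\sigma$, which carry powers of $y$). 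After dividing through by the Vega, the result is a quasilinear PDE for $\sigma(t,k)$ that coincides with the Berestycki-Busca-Florent implied-volatility equation, augmented by the two extra contributions coming from the affine drift $-a(t)-a(t)(1-k)\partial_k$ acting on $c^{\rm BS}$.

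The decisive point is the ordering in $t$. The dominant balance sits at order $t^{-1}$: the terms $\sigma/(2t)$, $\eta^2/(2\sigma t)$ and the Vega-normalized Vanna and Volga terms, which behave like $\log k/(\sigma^2 t)$ and $(\log k)^2/(\sigma^3 t)$, all diverge at this rate. By contrast, once normalized by the Vega the two drift terms are of lower order: using the Mills-ratio asymptotics one checks that $c^{\rm BS}/c^{\rm BS}_\sigma=O(t)$ and $\partial_k c^{\rm BS}/c^{\rm BS}_\sigma=O(1)$, so the drift enters only at order one and drops out of the $t^{-1}$ balance. Collecting the leading terms yields the eikonal equation $\sigma=\eta(0,k)\,(1-k\log k\,\partial_k\sigma/\sigma)$, and a direct substitution verifies that $\sigma(0,k)=\log k\,(\int_1^k dx/(x\eta(0,x)))^{-1}$ solves it, the correct branch being fixed by regularity at $k=1$. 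This simultaneously proves the harmonic-mean formula and the independence of the limit from the mean-reversion speed $a$.

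The hard part will be the rigorous justification that $\lim_{t\to0}\sigma(t,k)$ exists and that the drift terms are genuinely subleading uniformly on compact strike sets, rather than merely formally so. Under the standing hypotheses that $\eta$ is Lipschitz, bounded and positive and that $a$ is bounded and positive, I would adapt the viscosity-solution and short-time estimates of~\cite{Berestycki2002}, controlling the two additional first-order drift contributions by the same comparison bounds; alternatively one can argue through Varadhan-type large-deviation asymptotics of the transition density, where the exponential rate $d(1,k)^2/(2t)$ with $d(1,k)=\int_1^k dx/(x\eta(0,x))$ is determined solely by the diffusion coefficient, so that the bounded drift perturbs only the subexponential prefactor and not the rate.

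For the second stage, with~\eqref{eq:harmonic} established the level and skew relations are a local expansion about $k=1$, where the right-hand side is a removable $0/0$ singularity. Writing $I(k):=\int_1^k dx/(x\eta(0,x))$ and substituting $x=e^v$ gives $I(k)=\int_0^{\log k} dv/\eta(0,e^v)$, so that $\sigma(0,k)=\log k/I(k)$. Expanding the integrand for small $v$ yields $\sigma(0,k)=\eta(0,1)+\tfrac12\,\partial_k\eta(0,1)\,\log k+O((\log k)^2)$. Setting $k=1$ gives $\sigma(0,1)=\eta(0,1)$, and differentiating in $k$ and setting $k=1$ gives $\partial_k\sigma(0,1)=\tfrac12\,\partial_k\eta(0,1)$, that is $\partial_k\eta(0,1)=2\,\partial_k\sigma(0,1)$, which are exactly~\eqref{eq:asymptotics}. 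The only regularity needed, differentiability of $\eta(0,\cdot)$ at the money, is guaranteed by the smoothness of the cubic-spline parametrization chosen for the local-volatility function.
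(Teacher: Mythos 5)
Your overall route is the same as the paper's: expand $c^{\rm BS}(t,k,\sigma(t,k))$ by the chain rule inside the local-volatility formula \eqref{eq:locvol}, identify the leading balance as $t\rightarrow 0$, obtain the eikonal equation whose solution is the harmonic-mean formula \eqref{eq:harmonic}, make the limit rigorous via a comparison principle in the spirit of Berestycki--Busca--Florent (the paper states this explicitly as Lemma~\ref{lemma:comparison} and uses the sub/super-solutions $\psi(k)(1\pm\kappa t)$), and finally Taylor-expand \eqref{eq:harmonic} about $k=1$ to get \eqref{eq:asymptotics}. Your eikonal computation and the second stage are correct and match the paper.

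There is, however, a genuine flaw in your order-counting of the drift terms, which is the very point at issue in proving that the limit is independent of $a$. You claim that $c^{\rm BS}/\mathcal{V}^{\rm BS}=O(t)$ and $\partial_k c^{\rm BS}/\mathcal{V}^{\rm BS}=O(1)$, so that each of the two drift contributions is individually subleading. This holds only for $k>1$. For in-the-money strikes $k<1$ the call price tends to its intrinsic value $1-k>0$ while the Vega $\sqrt{t}\,\phi(y+\sigma\sqrt{t})$ vanishes like $\exp\{-(\log k)^2/(2\sigma^2 t)\}$, so $c^{\rm BS}/\mathcal{V}^{\rm BS}$ (and likewise $\Phi(y)/\mathcal{V}^{\rm BS}$) is exponentially large in $1/t$ --- far larger than the $t^{-1}$ terms you retain, so the argument that the drift drops out of the dominant balance breaks down on half the strike range. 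The drift is in fact subleading only because of an exact cancellation between the two terms you treat separately: at fixed $\sigma$ one has $c^{\rm BS}+(1-k)\,\partial_k c^{\rm BS}=\Phi(y+\sigma\sqrt{t})-\Phi(y)$, a Gaussian increment over an interval of length $\sigma\sqrt{t}$, which is $O(\mathcal{V}^{\rm BS})$ uniformly in $k$; the factor $(1-k)$ produced by the affine drift matches the call's intrinsic value exactly, and that is what makes the cancellation work. The paper isolates precisely this quantity, $\Lambda=(\Phi(y+\sigma\sqrt{t})-\Phi(y))/\mathcal{V}^{\rm BS}$, proves in Lemma~\ref{lemma:limit} that it converges to $\sigma(0,k)$, and then observes that it enters the exact identity \eqref{eq: rearranged exact local} multiplied by $2a\sigma t$, hence vanishes in the limit. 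You need this grouping (or an equivalent one) before the rest of your argument goes through.
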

\begin{proof}
The proof is shown in Appendix~\ref{sec:locimplvol}.
\end{proof}

The results given by Proposition~\ref{prop:asymptotics} can be used to tune the iteration algorithm of the calibration procedure to produce a quick convergence. We describe below our calibration strategy.
\begin{enumerate}
\item We consider the mean-reversion speed $a$ to be given. We illustrate in the following section some strategies we may adopt to fix its value.
\item Then we consider the local volatility function, and we set the nodes of the local-volatility function $\eta^{(0)}$ to be equal to the market volatilities.
\item We calculate the spot volatilities $\sigma^{(0)}$ implied from the PDE~\eqref{eq:dupire} when the local-volatility is interpolated on nodes $\eta^{(0)}$.
\item We compare $\sigma^{(0)}$ with $\sigma^{\rm mkt}$, and, if the distance is within a given threshold we stop the algorithm. Otherwise the algorithm proceeds to the next step.
\item We adjust the local volatility parameters by taking into account the first-order corrections for small time-to-maturities and near-ATM strikes to volatility level and skew derived in Proposition~\ref{prop:asymptotics}, as given by
\Eq{
\eta^{(1)}_{ij} = \eta^{(0)}_{ij} \frac{\sigma^{\rm mkt}_{ij_{\rm atm}}}{\sigma^{(0)}_{ij_{\rm atm}}} + 2 \left( \frac{\partial \sigma^{\rm mkt}_{ij}}{\partial k} - \frac{\partial \sigma^{(0)}_{ij}}{\partial k} \right) \Delta k_j \ind{j\neq j_{\rm atm}}
}%
where $j_{\rm atm}$ is the strike index referring to at-the-money options.
\item We repeat the algorithm from the third step with $\eta^{(1)}$. Then, the iteration is repeated until convergence, or until a maximum number of iterations is reached.
\end{enumerate}

Our calibration strategy can be viewed as a fixed-point algorithm. We can greatly increase the convergence rate by means of the Anderson Accelerated (AA) fixed-point algorithm, as described in~\cite{Anderson1965}, and later reviewed in~\cite{Walker2011}. The AA algorithm improves the standard fixed-point algorithm by taking into account on each step the errors measured in the previous steps. We describe the algorithm by means of the following pseudo-code.
\medskip
\begin{algorithmic}[1]
 \Procedure{AA}{$x^0$,$f$,$\epsilon$,$m$}
 \State $i \leftarrow 1$
 \State $x^1 \leftarrow f(x^0)$
 \Repeat
  \State $m_i \leftarrow \min\{m,i\}$
  \State $f^{(i)} \leftarrow \{f(x^{i-m_i}),\ldots,f(x^i)\}$
  \State $e^{(i)} \leftarrow \{f(x^{i-m_i})-x^{i-m_i},\ldots,f(x^i)-x^i\}$
  \State $\alpha^{(i)} \leftarrow \arg\min\{ \| \alpha^{(i)} \cdot e^{(i)} \| \}$ subject to $\sum_{j=0}^{m_i} \alpha^{(i)}_j = 1$
  \State $x^{i+1} \leftarrow \alpha^{(i)} \cdot f^{(i)}$
  \State $i \leftarrow i+1$
 \Until{$\| x^{i+1} - x^i \| < \epsilon$}
 \State {\bf return} $x^{i+1}$
 \EndProcedure
\end{algorithmic}
\medskip

We present some numerical experiments with real market data. We find that the AA scheme, along with the choice of correcting both the level and the skew of the local-volatility function on each iteration step, greatly improves the algorithm of~\cite{Reghai2012}, leading to a calibration error in volatility space of one tenth of basis point within 15-30 iterations in most cases. We show in Figures~\ref{fig:calibcoffee},~\ref{fig:calibcopper}, and~\ref{fig:calibwti} the convergence rate. The blue lines refer to a calibration strategy where both level and skew of the local-volatility function are updated, while red lines to the one where only the level is updated. Dashed lines refer standard fixed-point iteration, while solid lines to AA iterations. Our target calibration strategy is the solid blue line, while the calibration strategy of~\cite{Reghai2012} is the dashed red line.

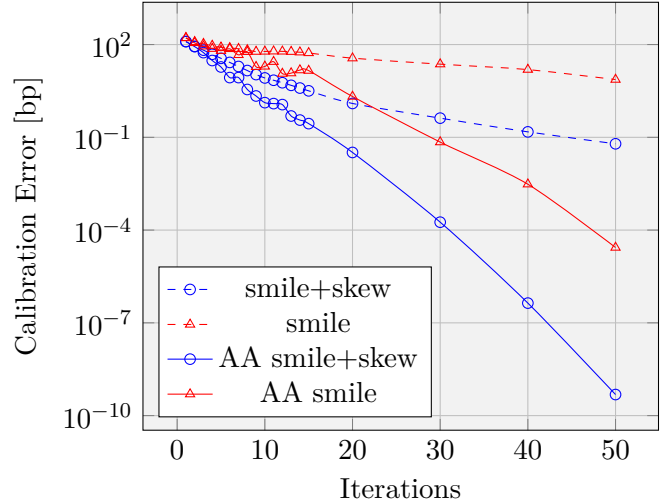
\begin{figure}
\begin{center}
\begin{tikzpicture}
\begin{semilogyaxis}[xlabel=Iterations,
                     ylabel=Calibration Error {[bp]},
                     ylabel style={overlay},
                     grid=major,
                     legend style={legend pos=south west},
                     axis background/.style={fill=gray!10}]
  \addplot [color=blue,mark=o,mark options={style=solid},smooth,style=dashed] table [y=fe,x=N] from \calibcoffee;
  \addplot [color=red,mark=triangle,mark options={style=solid},smooth,style=dashed] table [y=rbv,x=N] from \calibcoffee;
  \addplot [color=blue,mark=o,smooth] table [y=AAfe,x=N] from \calibcoffee;
  \addplot [color=red,mark=triangle,smooth] table [y=AArbv,x=N] from \calibcoffee;
  \legend{smile+skew,smile,AA smile+skew,AA smile}
\end{semilogyaxis}
\end{tikzpicture}
\end{center}
\caption{Calibration of 65 options on Coffee Arabica Futures quoted on 2 November 2017 on ICE market with and without Anderson acceleration scheme (AA). Dashed red curve refers to~\cite{Reghai2012} algorithm, solid blue line to the algorithm presented in this paper with the speed of mean reversion set to zero.}
\label{fig:calibcoffee}
\end{figure}

\begin{figure}
\begin{center}
\begin{tikzpicture}
\begin{semilogyaxis}[xlabel=Iterations,
                     ylabel=Calibration Error {[bp]},
                     ylabel style={overlay},
                     grid=major,
                     legend style={legend pos=south west},
                     axis background/.style={fill=gray!10}]
  \addplot [color=blue,mark=o,mark options={style=solid},smooth,style=dashed] table [y=fe,x=N] from \calibcopper;
  \addplot [color=red,mark=triangle,mark options={style=solid},smooth,style=dashed] table [y=rbv,x=N] from \calibcopper;
  \addplot [color=blue,mark=o,smooth] table [y=AAfe,x=N] from \calibcopper;
  \addplot [color=red,mark=triangle,smooth] table [y=AArbv,x=N] from \calibcopper;
  \legend{smile+skew,smile,AA smile+skew,AA smile}
\end{semilogyaxis}
\end{tikzpicture}
\end{center}
\caption{Calibration of 125 options on Copper Futures quoted on 2 November 2017 on CME market with and without Anderson acceleration scheme (AA). Dashed red curve refers to~\cite{Reghai2012} algorithm, solid blue line to the algorithm presented in this paper with the speed of mean reversion set to zero.}
\label{fig:calibcopper}
\end{figure}

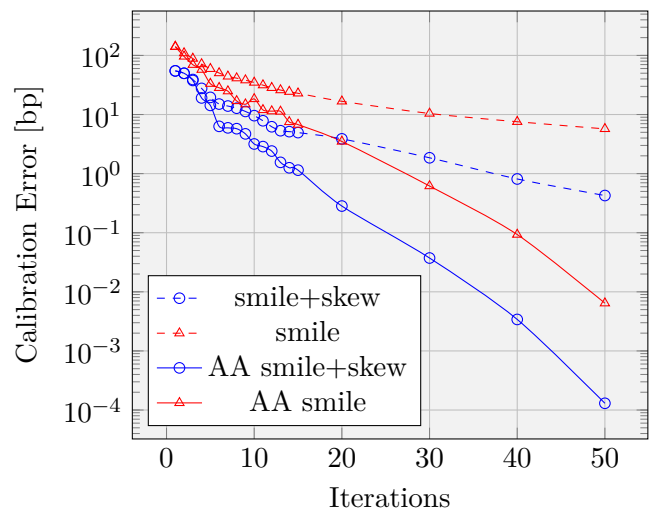
\begin{figure}
\begin{center}
\begin{tikzpicture}
\begin{semilogyaxis}[xlabel=Iterations,
                     ylabel=Calibration Error {[bp]},
                     ylabel style={overlay},
                     grid=major,
                     legend style={legend pos=south west},
                     axis background/.style={fill=gray!10}]
  \addplot [color=blue,mark=o,mark options={style=solid},smooth,style=dashed] table [y=fe,x=N] from \calibwti;
  \addplot [color=red,mark=triangle,mark options={style=solid},smooth,style=dashed] table [y=rbv,x=N] from \calibwti;
  \addplot [color=blue,mark=o,smooth] table [y=AAfe,x=N] from \calibwti;
  \addplot [color=red,mark=triangle,smooth] table [y=AArbv,x=N] from \calibwti;
  \legend{smile+skew,smile,AA smile+skew,AA smile}
\end{semilogyaxis}
\end{tikzpicture}
\end{center}
\caption{Calibration of 100 options on WTI Crude Oil Futures quoted on 23 May 2018 on CME market with and without Anderson acceleration scheme (AA). Dashed red curve refers to~\cite{Reghai2012} algorithm, solid blue line to the algorithm presented in this paper with the speed of mean reversion set to zero.}
\label{fig:calibwti}
\end{figure}

In the previous examples the speed of mean reversion is set equal to zero. In Figure~\ref{fig:calibwti-a} we show the convergence results for WTI calibration for different values of the speed of mean reversion, and the resulting at-the-money local volatility functions along with the market volatilities. As previously stated in Remark~\ref{remark:meanrev} we recall that with specific market conditions, e.g.\ when volatility decreases strongly over time, the calibration procedure may fail with small values of $a$, and in particular with $a=0$. In such case, the introduction of a mean reversion is necessary to match the market volatility.

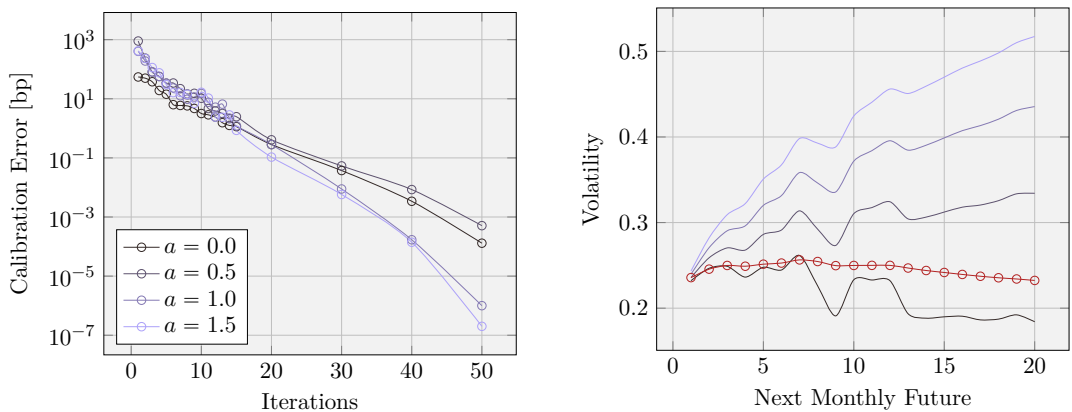
\begin{figure}
\begin{center}
\begin{tikzpicture}[scale=0.8]
\begin{semilogyaxis}[xlabel=Iterations,
                     ylabel=Calibration Error {[bp]},
                     ylabel style={overlay},
                     grid=major,
                     legend style={legend pos=south west},
                     axis background/.style={fill=gray!10}]
  \addplot [color=myblueA,mark=o,smooth] table [y=a0,x=N] from \calibwtia;
  \addplot [color=myblueB,mark=o,smooth] table [y=a5,x=N] from \calibwtia;
  \addplot [color=myblueC,mark=o,smooth] table [y=a10,x=N] from \calibwtia;
  \addplot [color=myblueD,mark=o,smooth] table [y=a15,x=N] from \calibwtia;
  \legend{$a=0.0$,$a=0.5$,$a=1.0$,$a=1.5$}
\end{semilogyaxis}
\end{tikzpicture}
\hspace*{1cm}
\begin{tikzpicture}[scale=0.8]
\begin{axis}[xlabel=Next Monthly Future,
                     ylabel=Volatility,
                     ylabel style={overlay},
                     grid=major,
                     axis background/.style={fill=gray!10}]
  \addplot [color=myblueA,smooth] table [y=a0,x=T] from \calibwtilv;
  \addplot [color=myblueB,smooth] table [y=a5,x=T] from \calibwtilv;
  \addplot [color=myblueC,smooth] table [y=a10,x=T] from \calibwtilv;
  \addplot [color=myblueD,smooth] table [y=a15,x=T] from \calibwtilv;
  \addplot [color=mydarkred,mark=o,smooth] table [y=mkt,x=T] from \calibwtilv;
\end{axis}
\end{tikzpicture}
\end{center}
\caption{Calibration of 100 options on WTI Crude Oil Futures quoted on 23 May 2018 on CME market with the speed of mean reversion ranging from zero up to $1.5$ with a step of $0.5$. On the left side panel we show the calibration error, while on the right side panel the resulting at-the-money local volatilities (in blue) along with market volatilities (in red).}
\label{fig:calibwti-a}
\end{figure}

\subsection{Calibration of the Mean-Reversion Speed}
\label{sec:meanrev}

If we look only at plain-vanilla prices we cannot fix the mean-reversion speed parameter in a robust way. Indeed, we can easily calibrate the local volatility function to plain-vanilla option prices for different choices of mean-reversion speed, as we can be easily seen in the left side panel of Figure~\ref{fig:calibwti-a}. We need a different strategy to fix this parameter.

A natural solution is to enrich our calibration set. For instance, we can look at the prices of mid-curve options (MCO) and calendar spread options (CSO), which are usually quoted for commodity markets. Here, the main problems concern the fact that these quotes are not always liquid. Moreover, the general specification of the model inclusive of curve and smile dynamics, which we are discussing in the next session, leads to time-consuming pricing algorithms not suited for calibration purposes. Thus, a possible strategy to fix the mean reversion speed is following~\cite{Drimus2013} where a similar problem occurring in the equity market is tackled by requiring that the mean-reversion speed is fixed so that the slope of the local-volatility function is minimized. We leave for a future work the complete exploration of this issue.

However, if MCO or CSO prices are liquid enough, we can derive a simple calibration algorithm if we limit ourselves to model the futures prices by means of the the fictitious spot price. Indeed, in this case we can use the spot-futures relationship~\eqref{eq:fwdspot} to map plain-vanilla options on irregular expiry dates and spread options on futures prices as standard plain-vanilla options on the fictitious spot price. This approach is similar in spirit to the one taken by~\cite{andersen2010interest} when deriving a local-volatility model for swaption prices.

MCO's on commodity futures are European call options where the expiry date of the option contract occurs well before the $T^{\textrm last}$ of the underlying futures contract. Thus, the pricing formula is simply given by Equations~\eqref{eq:optmarg} and~\eqref{eq:optcoll} with an option expiring date which precedes the expiry date of a standard plain-vanilla option.

CSO's on commodity futures are spread options between two different futures contracts on the same commodity, namely with payoff
\Eq{
\left(F_{T_e}^{\textrm{ID}_1} - F_{T_e}^{\textrm{ID}_2}- K\right)^{\!+}
}%
where $T_e$ is the spread option exercise date, while $\textrm{ID}_1$ and $\textrm{ID}_2$ are the identifiers of the two futures. If the fictitious spot dynamics holds, by using Equation~\eqref{eq:fwdspot}, we can write for the future-style margining case\footnote{Here, we use the date subscripts to distinguish the futures $T^{\textrm{last}}$ dates, not as in Section~\ref{sec:market} to distinguish between the start and end dates of a period.}
\Eq{
\begin{aligned}
V^{\rm CSO}_t(T_e,T_1^{\textrm{last}},T_2^{\textrm{last}},K) :&= \Ex{t}{\left(F_{T_e}(T_{1}^{\textrm{last}}) - F_{T_e}(T_{2}^{\textrm{last}}) - K\right)^{\!+}}  \\ 
&=
\begin{cases}
A \,c(T^e,B)                                &  \text{if} \;\; A>0, B>0 \\
A \,(1-B)                                      &  \text{if} \;\; A>0, B\leq0 \\
-A \left( \,c(T^e,B)+B-1 \right)  &  \text{if} \;\; A<0, B>0 \\
0                                                  &  \text{if} \;\; A\leq0, B\leq0 \\
\end{cases}
\end{aligned}
}%
where we define
\Eq{
A(T_e,T_1^{\textrm{last}},T_2^{\textrm{last}}) := F_0(T_{1}^{\textrm{last}}) e^{-\int_{T_e}^{T_{1}^{\textrm{last}}} a(u) \,du} - F_0(T_{2}^{\textrm{last}}) e^{-\int_{T_e}^{T_{2}^{\textrm{last}}} a(u) \,du}
}%
and
\Eq{
B(T_e,T_1^{\textrm{last}},T_2^{\textrm{last}}) := 1 + \frac{K - F_0(T_{1}^{\textrm{last}}) + F_0(T_{2}^{\textrm{last}})}{A(T_e,T_1^{\textrm{last}},T_2^{\textrm{last}})}
}%

We tested these results with real market data. We assume a time-independent speed of mean reversion $a$ and we consider the case of the WTI market. In particular, we consider CSO's on two consecutive futures, which we can term CL1 and CL2. The maturities of such futures occur on two consecutive months. We can read from the plain-vanilla market the implied volatilities $\sigma^1_1$ and $\sigma^2_2$ of such futures. On the other hand, the market quotes CSO prices, which we can express in a normalized way. We assume, only for quotation purpose, that the futures prices follow a log-normal dynamics and we set the correlations among them to one, so that CSO prices result in a mono-dimensional integral.
\begin{multline}
V^{\rm CSO}_t(T_e,T_1^{\textrm{last}},T_2^{\textrm{last}},K) =\\ \int_{-\infty}^{\infty} \left( F_0(T_1^{\textrm{last}}) e^{-(\sigma_1^1)^2 T_e - \sigma_1^1 \sqrt{T_e}x} - F_0(T_2^{\textrm{last}}) e^{-(\sigma_1^2)^2 T_e - \sigma_1^2 \sqrt{T_e}x} - K \right)^{\!+} \phi(x) \,dx
\end{multline}%
where $T_e$ is the CSO expiry date, $T_1^f$ is the last trading date of the CL1, $T_2^f$ is the last trading date of the CL2, $K$ is the CSO strike price and $\phi(x)$ is the density function of a standard normal variable. Once the market prices of CSO are known we can invert\footnote{The equation actually shows two positive solutions: We arbitrarily choose the lesser one to define our quotation metric.} the above equation to derive $\sigma^2_1$. In Figure~\ref{fig:csowti} we plot the volatility drop implied by calendar spread options for two consecutive futures contracts, namely the difference $\sigma^2_2-\sigma^2_1$. We can see that a value of $a=0.5$ fit most of the market quotes. We could promote $a$ to be time-dependent if we wish a better or exact fit.

\begin{figure}
\begin{center}
\begin{tikzpicture}
\begin{axis}[xlabel=Expiry,
                     ylabel=Volatility Drop {[bp]},
                     ylabel style={overlay},
                     grid=major,
                     axis background/.style={fill=gray!10}]
  \addplot [color=myblueA,smooth] table [y=a0,x=T] from \csowti;
  \addplot [color=myblueB,smooth] table [y=a5,x=T] from \csowti;
  \addplot [color=myblueC,smooth] table [y=a10,x=T] from \csowti;
  \addplot [color=myblueD,smooth] table [y=a15,x=T] from \csowti;
  \addplot [color=mydarkred,mark=o,mark options={style=solid},style=solid] table [y=market,x=T] from \csowti;
\end{axis}
\end{tikzpicture}
\end{center}
\caption{WTI Crude Oil 1 Month Calendar Spread Options quoted on 23 May 2018 on CME market. The red dots represent the first 18 quoted maturities in terms of volatility drop, while the blue lines are model-implied volatility drops for mean reversion values ranging from top to bottom from $1.5$ to zero in step of $0.5$.}
\label{fig:csowti}
\end{figure}
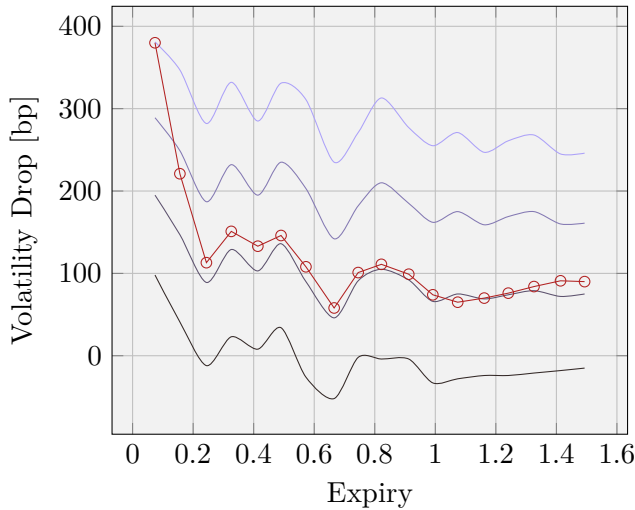

\section{Curve and Smile Dynamics}
\label{sec:smile}

By using the normalized spot process we are able to price all futures options with a single PDE evaluation. In this way we implicitly obtain the marginal probability densities of futures prices. We now look at the joint probability densities between two or more futures prices, and at the transition densities in each futures dynamics. In this way we will be able to model the curve and skew dynamics.

\subsection{Markovian Projections}

A more flexible description of joint probability densities and transition densities can be achieved by introducing new risk factors. Usually a complex curve dynamics can be modelled by allowing each futures price to be driven by its own stochastic process, while smile dynamics can be modelled by means of a stochastic volatility process. In particular, the term SLV model is used to refer to models with a volatility depending both on the spot price and on additional stochastic processes. Here, we extend the stochastic-local volatility framework to allow also for curve dynamics. SLV models were first presented in~\cite{Ren2007} along with the description of a practical calibration procedure based on the application of the Gy\"ongy Lemma, see~\cite{Gyongy1986}, which states under which conditions the marginal densities of a stochastic-local volatility model match the ones predicted by a local-volatility model.

Our intent is to promote each futures price to be driven by a different SLV model, and, at the same time, to limit the number of free parameters since commodity markets quotes few derivative contracts beside the plain-vanilla options already used when we have calibrated the local-volatility model. We start by considering a generic stochastic volatility dynamics for futures prices
\begin{equation}
dF_t(T) = \nu_t(T) \cdot dW^F_t
\label{eq:future}
\end{equation}%
where $F$ is the vector of futures prices (each entry refers to a different quoted maturity), $\nu$ is an adapted vector process, and $W^F$ a vector of standard Brownian motions under risk-neutral measure. The notation $a \cdot b$ refers to the internal product between vectors $a$ and $b$.

We are interested in preserving the marginal densities of futures prices, and not of the normalized spot price, so that we apply the Gy\"ongy Lemma directly on the futures dynamics. Thus, we can calibrate the SLV model to match the marginal densities of futures prices predicted by the local-volatility model, and in turn to match the plain-vanilla option-on-futures prices quoted on the market, by requiring that $\nu$ satisfies
\begin{equation}
\ExC{0}{\|\nu_t(T)\|^2}{F_t(T)=K} = \eta_F^2(t,T,K)
\label{eq:gyongy}
\end{equation}%
where the local-volatility for futures prices in defined in Equation~\eqref{eq:etaF}.

We recall that following this approach we cannot use any longer the results of Section~\ref{sec:meanrev} since we are altering the futures dynamics keeping fixed only the marginal densities, so that the dependency structure of future rates may change. Thus, if we proceed in this direction we have also to design a new calibration procedure for the mean reversion parameter.

\subsection{Multiple Driving Factors}

As we have seen in Section~\ref{sec:meanrev} the one-factor mean-reverting normalized spot dynamics of Equation~\eqref{eq:normspot} can be used to recover CSO market quotes, and in turn to describe how the futures volatility drops near the futures expiry date. On the other hand, if we are interested in modelling the dependency between futures, for instance we are interested in terminal correlations to price options on the spread between a short-dated and a long-dated futures contract, we should introduce more driving factors.

Indeed, if we calculate the terminal covariance between two futures prices in the one-factor mean-reverting local volatility model we get by using Equation~\eqref{eq:fwdspot}
\Eq{
\cov{0}{}{F_t(T_1)}{F_t(T_2)} = F_0(T_1)F_0(T_2) \left( \left(\Ex{0}{s^2_t}-1\right) e^{-\int_t^{T_1} a(u) \,du} e^{-\int_t^{T_2} a(u) \,du} - 1 \right)
}%
leading to
\Eq{
\corr{0}{}{F_t(T_1)}{F_t(T_2)} := \frac{\cov{0}{}{F_t(T_1)}{F_t(T_2)}}{\sqrt{\var{0}{}{F_t(T_1)}\var{0}{}{F_t(T_2)}}} = 1
}%
This means that in the one-factor model CSO market quotes are recovered thanks to the volatility profile of each futures contract, and not to a specific correlation structure given by the model. Thus, if we need to model terminal correlations we need more driving factors.

However, if we wish to preserve the calibration to plain vanilla options, we need to define a dynamics for futures prices which satisfies Equation~\eqref{eq:gyongy}. A simple way to do so is considering a two-dimensional model by defining the volatility vector process as
\begin{equation}
\nu_t(T) := \eta_F(t,T,F_t(T)) \begin{bmatrix} \rho(T) \\ \sqrt{1-\rho(T)^2} \end{bmatrix}
\end{equation}%
where $\rho(T)$ is a function of futures expiry date and it is bounded in the interval $[-1,1]$. With this choice the instantaneous correlation among two different futures is given by
\Eq{
d\langle F(T_1),F(T_2) \rangle_t = \left( \rho(T_1)\rho(T_2) + \sqrt{(1-\rho^2(T_1))(1-\rho^2(T_2))} \right) dt
}%
and the terminal correlations are no longer naive.

The calibration of mean-reversion and correlation parameters to market quotes can be achieved by solving the corresponding bi-dimensional pricing PDEs.

\subsection{Stochastic Volatility Extensions}

If we wish to model also the smile dynamics we need to introduce a stochastic volatility process. A possible simple prescription is given by the two-dimensional process
\begin{equation}
\nu_t(T) := v_t \frac{\eta_F(t,T,F_t(T))}{\sqrt{\ExC{0}{v^2_t}{F_t(T)}}} \begin{bmatrix} \rho(T) \\ \sqrt{1-\rho(T)^2} \end{bmatrix}
\label{eq:nu}
\end{equation}%
where the volatility process $v_t$ can be defined under risk-neutral measure as in~\cite{Ren2007} as given by
\begin{equation}
d\log v_t = - \left(\frac{1+e^{-2t}}{2}\xi^2 + \log v_t\right)\,dt + \xi \,dW^v_t
\;,\quad
v_0 = 1
\;,\quad
d\langle W^F, W^v\rangle_t = \begin{bmatrix} \rho^v \\ \rho^v \end{bmatrix} dt
\label{eq:v}
\end{equation}%
where the drift part is defined so that $\Ex{0}{v^2_t}=1$, while the correlation parameter $\rho^v$ is bounded in $\left[-\frac{1}{\sqrt{2}},\frac{1}{\sqrt{2}}\right]$. Other prescriptions for the volatility process can be analyzed such as the classical CIR model, or more recent proposal such as~\cite{Tataru2010} or the polynomial models of~\cite{Ackerer2016b}.

The vol-of-vol $\xi$ and the correlations $\{\rho(T_1),\ldots,\rho(T_n)\}$ and $\rho^v$ are free parameters which can be calibrated to exotic options, such as options on futures of different delivery tenors, spread options, MCO's or CSO's. Moreover, when considering the full SLV specification, the speed of mean reversion can be calibrated along with these parameters. Notice, that here CSO's cannot easily be mapped into plain-vanilla options to spare computational time. We leave to a future work the design of calibration procedures dealing with the three-factor dynamics deriving from Equation~\eqref{eq:nu}.

Monte Carlo simulation of the model dynamics, when coefficients depend on conditional densities, can be designed as in~\cite{Guyon2012} or in~\cite{Oosterlee2014}. Both methods starts from the knowledge of the local-volatility function, so that they do not require an explicit calibration step of plain vanilla options.

\section{Conclusion and Further Developments}
\label{sec:conclsion}

In this paper we presented a SLV model for derivative contracts on commodity futures inclusive of forward-curve and smile dynamics characterized by a parsimonious parametrization to deal with the limited number of options quoted in the market. In particular, we first introduced a local-volatility one-factor process with affine drift to model future-price marginal densities, we extended the Dupire equation to allow the implementation of a fast and robust calibration algorithm based on a fixed-point iteration accelerated by means of the Anderson scheme. We investigate the performance of this one-factor model to recover exotic options quoted in the market, such as calendar spread options. Then, we defined a SLV dynamics for each futures price, so to model its curve and smile dynamics, by matching the future-price marginal densities calibrated in the first step.

We leave for future developments an extensive numerical analysis of predicted options prices in different commodity markets, the further extension of our version of the Dupire equation to the case of early-exercise options, and the implementation of calibration algorithms to fix correlation and volatility-of-volatility parameters.

\newpage

\bibliographystyle{plainnat}
\bibliography{cosmile}

\appendix

\section{Proof of Proposition~\ref{prop:asymptotics}}
\label{sec:locimplvol}

In this appendix we show the proof of Proposition~\ref{prop:asymptotics}. As a first tool we need two lemmas.
\begin{lemma}
In the Black framework given by Equation~\eqref{eq: black} if the small-time limit
\Eq{
\sigma(0,k) := \lim_{t\rightarrow 0} \sigma(t,k)
}%
exists, then also the following limit exists
\begin{equation}
\lim_{t\rightarrow 0} \frac{\Phi(y + \sigma \sqrt{t}) - \Phi(y)}{\mathcal{V}^{\rm BS}(t,k,\sigma)} = \sigma(0,k)
\label{eq:limit}
\end{equation}
where the plain-vanilla Black Vega is defined as
\Eq{
\mathcal{V}^{\rm BS}(t,k,\sigma) := \partial_\sigma c^{\rm BS}(t,k,\sigma)
}%
\label{lemma:limit}
\end{lemma}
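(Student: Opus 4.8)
The plan is to reduce the quotient to a single elementary integral and pass to the limit under the integral sign. First I would put the Vega in closed form. Writing $d_1:=y+\sigma\sqrt t$ and $d_2:=y$, one has $d_1-d_2=\sigma\sqrt t$ and $d_1+d_2=-2\log k/(\sigma\sqrt t)$, whence $\phi(d_1)=k\,\phi(d_2)$. Differentiating $c^{\rm BS}=\Phi(d_1)-k\Phi(d_2)$ in $\sigma$ and using this identity makes the two density terms collapse, leaving the classical expression $\mathcal V^{\rm BS}(t,k,\sigma)=\sqrt t\,\phi(d_1)$. The quotient in~\eqref{eq:limit} is therefore $(\Phi(d_1)-\Phi(d_2))/(\sqrt t\,\phi(d_1))$, an indeterminate $0/0$ form as $t\to0$, since for $k\neq1$ both $d_1$ and $d_2$ run off to $\pm\infty$ together, while at the money they stay near the origin.

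Next I would write the numerator as $\int_{d_2}^{d_1}\phi(u)\,du$ and substitute $u=d_1-\sigma\sqrt t\,s$ with $s\in[0,1]$. This factors the quotient exactly as
\[
\frac{\Phi(d_1)-\Phi(d_2)}{\mathcal V^{\rm BS}(t,k,\sigma)}
=\sigma(t,k)\int_0^1\frac{\phi(d_1-\sigma\sqrt t\,s)}{\phi(d_1)}\,ds
=\sigma(t,k)\int_0^1\exp\!\Big(\sigma\sqrt t\,d_1\,s-\tfrac12\sigma^2 t\,s^2\Big)\,ds .
\]
The prefactor is exactly the implied volatility, so by hypothesis it converges to $\sigma(0,k)$. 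For the integral I would note that $\sigma\sqrt t\,d_1=-\log k+\tfrac12\sigma^2 t\to-\log k$ (again using $\sigma(t,k)\to\sigma(0,k)$), so the integrand converges pointwise on $[0,1]$ to $k^{-s}$; since it is bounded there by the $s$-independent constant $e^{|\log k|+1}$ for $t$ near $0$, dominated convergence applies routinely and yields the explicit limit $\sigma(0,k)\int_0^1 k^{-s}\,ds$.

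The main obstacle is precisely the identification of this integral factor. The delicate feature is that the integration window has vanishing length $\sigma\sqrt t$ yet migrates to $\pm\infty$, so the density ratio $\phi(d_1-\sigma\sqrt t\,s)/\phi(d_1)$ is \emph{not} uniformly close to $1$ and a naive ``short interval'' estimate fails; controlling it is exactly where the standing assumption $\lim_{t\to0}\sigma(t,k)=\sigma(0,k)$ does the real work, by pinning the exponent $\sigma\sqrt t\,d_1$. At the money $\log k=0$, the exponent vanishes, the integrand is identically $1$, the factor is exactly $1$, and the limit is $\sigma(0,1)$; this is the case directly feeding the level identity $\eta(0,1)=\sigma(0,1)$ of Proposition~\ref{prop:asymptotics}. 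Establishing that the factor reduces to unity away from the money is the crux of the statement, and is the step I would scrutinise most carefully before claiming the limit equals $\sigma(0,k)$ for all $k$.
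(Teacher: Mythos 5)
Your computation is correct, and it takes a genuinely different route from the paper: you reduce the quotient exactly, via the substitution $u=d_1-\sigma\sqrt{t}\,s$, to $\sigma(t,k)\int_0^1\exp\left(\sigma\sqrt{t}\,d_1 s-\tfrac12\sigma^2 t s^2\right)ds$ and pass to the limit by dominated convergence, whereas the paper sandwiches $\Phi(y+\sigma\sqrt{t})-\Phi(y)$ between $\sigma\sqrt{t}\,\phi(\underline{x})$ and $\sigma\sqrt{t}\,\phi(\widebar{x})$ and then asserts that both ratios $\phi(\underline{x})/\phi(y+\sigma\sqrt{t})$ and $\phi(\widebar{x})/\phi(y+\sigma\sqrt{t})$ tend to one. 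But you should press your calculation to its conclusion instead of leaving the ``identification of the integral factor'' as a point still to be scrutinised: the factor is elementary, $\int_0^1 k^{-s}\,ds=\frac{k-1}{k\log k}$, which equals $1$ only at $k=1$. Hence the limit exists but equals $\sigma(0,k)\,\frac{k-1}{k\log k}$, and the value $\sigma(0,k)$ asserted in Lemma~\ref{lemma:limit} is wrong off the money. Your own identity $\phi(d_1)=k\,\phi(d_2)$ pinpoints where the paper's argument breaks: for $k\neq1$ and small $t$ the interval $[y,y+\sigma\sqrt{t}]$ lies entirely on one side of the origin, $\phi$ is monotone there, so the extrema sit at the endpoints and the ratio $\phi(y)/\phi(y+\sigma\sqrt{t})$ is \emph{identically} $1/k$, not asymptotically $1$; the sandwich therefore only traps the quotient between $\sigma/k$ and $\sigma$ (for $k>1$), which brackets, but does not determine, your exact limit. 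A sanity check with constant $\sigma$ and $k=e$ gives the quotient tending to $1-e^{-1}\approx0.632\neq1$, confirming your closed form.

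Fortunately the discrepancy is immaterial downstream. In Equation~\eqref{eq: rearranged exact local} the quantity $\Lambda$ enters only through the term $2a\sigma t\left(\Lambda+(1-k)\,\partial_k\sigma\right)$, so the proof of Proposition~\ref{prop:asymptotics} needs merely that $\Lambda$ stay bounded as $t\to0$ --- which either your dominated-convergence argument or the paper's (valid) two-sided bound supplies --- and the $a$-dependent term vanishes in the limit regardless of the constant, so the harmonic-mean formula~\eqref{eq:harmonic} and the level and skew identities~\eqref{eq:asymptotics} are untouched; at the money your factor is unity, consistent with $\eta(0,1)=\sigma(0,1)$. The correct resolution is thus to restate the Lemma with limit $\sigma(0,k)\,\frac{k-1}{k\log k}$ (or to claim only boundedness of $\Lambda$), and your substitution-plus-dominated-convergence argument is a complete proof of that corrected statement.
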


\begin{proof}
From the definition of cumulative Gaussian probability distribution we can write for any real numbers $a$ and $b$ such that $b>a$ and $ab>0$
\Eq{
\Phi(b) - \Phi(a) = \int_a^b \phi(x) \,dx
}%
where $\phi$ is the Gaussian probability density. We can limit the value of the integral by considering the minimum and the maximum of the integrand.
\Eq{
(b-a) \phi(\underline{x}) \leq \Phi(b) - \Phi(a) \leq (b-a) \phi(\widebar{x})
}%
where we define
\Eq{
\underline{x} := \argmin_{x\in[a,b]} \phi(x)
\;,\quad
\widebar{x} := \argmax_{x\in[a,b]} \phi(x)
}%
In particular if $b:=y+\sigma\sqrt{t}$ and $a:=y$, where $y$ is defined as in Equation~\eqref{eq: black}, we can write
\Eq{
\sigma\sqrt{t} \,\phi(\underline{x}) \leq \Phi(y+\sigma\sqrt{t}) - \Phi(y) \leq \sigma\sqrt{t} \,\phi(\widebar{x})
}%
Moreover, since the plain-vanilla Black Vega can be explicitly calculated, and it is given by
\Eq{
\mathcal{V}^{\rm BS}(t,k,\sigma) = \sqrt{t} \,\phi(y+\sigma\sqrt{t})
}%
we can write
\Eq{
\sigma \frac{\phi(\underline{x})}{\phi(y+\sigma\sqrt{t})} \leq \frac{\Phi(y+\sigma\sqrt{t}) - \Phi(y)}{\mathcal{V}^{\rm BS}(t,k,\sigma)} \leq \sigma \frac{\phi(\widebar{x})}{\phi(y+\sigma\sqrt{t})}
}%
Then, since we have
\Eq{
\lim_{t\rightarrow 0} \frac{\phi(\underline{x})}{\phi(y+\sigma\sqrt{t})} = 1
\;,\quad
\lim_{t\rightarrow 0}\frac {\phi(\widebar{x})}{\phi(y+\sigma\sqrt{t})} = 1
}%
we prove the Lemma.
\end{proof}

\begin{lemma} {\bf (Comparison Principle)}
We consider two implied volatility functions $\widebar{\sigma}(t,k)$ and $\underline{\sigma}(t,k)$. If the corresponding local volatility functions obtained by means of Equation~\eqref{eq:locvol} are ordered so that $\eta[\underline{\sigma}](t,k) \le \eta[\widebar{\sigma}](t,k)$ for any time and strike, then $\underline{\sigma}(t,k) \le \widebar{\sigma}(t,k)$.
\label{lemma:comparison}
\end{lemma}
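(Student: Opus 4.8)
The plan is to translate the statement into a comparison principle for the parabolic extended Dupire equation of Proposition~\ref{prop:dupire} and then invoke the maximum principle. First I would record the elementary but crucial reduction: since the Black price $c^{\rm BS}(t,k,\sigma)$ of Equation~\eqref{eq: black} has strictly positive Vega, $\mathcal{V}^{\rm BS}=\sqrt{t}\,\phi(y+\sigma\sqrt t)>0$, it is strictly increasing in $\sigma$. Hence, writing $\widebar{c}(t,k):=c^{\rm BS}(t,k,\widebar{\sigma}(t,k))$ and $\underline{c}(t,k):=c^{\rm BS}(t,k,\underline{\sigma}(t,k))$, the desired conclusion $\underline{\sigma}\le\widebar{\sigma}$ is \emph{equivalent} to the pointwise inequality $\underline{c}\le\widebar{c}$. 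By the very construction of Equation~\eqref{eq:locvol}, each of these call surfaces solves the extended Dupire equation with its own local volatility: $\partial_t\widebar{c}=\bigl(-a-a(1-k)\partial_k+\tfrac12 k^2\widebar{\eta}^2\partial_k^2\bigr)\widebar{c}$ and likewise for $\underline{c}$ with $\underline{\eta}$, where $\widebar{\eta}:=\eta[\widebar{\sigma}]$ and $\underline{\eta}:=\eta[\underline{\sigma}]$.

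Next I would subtract the two equations for $w:=\widebar{c}-\underline{c}$ and split the diffusion term so that a single linear operator acts on $w$. Using $\widebar{\eta}^2\partial_k^2\widebar{c}-\underline{\eta}^2\partial_k^2\underline{c}=\widebar{\eta}^2\partial_k^2 w+(\widebar{\eta}^2-\underline{\eta}^2)\partial_k^2\underline{c}$, one finds that $w$ satisfies
\[
\partial_t w+a\,w+a(1-k)\partial_k w-\tfrac12 k^2\widebar{\eta}^2\partial_k^2 w \;=\; \tfrac12 k^2\bigl(\widebar{\eta}^2-\underline{\eta}^2\bigr)\partial_k^2\underline{c} \;=:\; f .
\]
The right-hand side is nonnegative: by hypothesis $\underline{\eta}\le\widebar{\eta}$ (both positive) so $\widebar{\eta}^2-\underline{\eta}^2\ge0$, and $\partial_k^2\underline{c}=p_{s_t}(k)\ge0$ is the (Black) risk-neutral density, exactly as used in the proof of Proposition~\ref{prop:dupire}. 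The initial and boundary data of $w$ all vanish, since $\widebar{c}$ and $\underline{c}$ share the payoff $c(0,k)=(1-k)^+$ and the boundary values $c(t,0)=1$, $c(t,\infty)=0$.

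It then remains to read off $w\ge0$ from the comparison principle for parabolic equations. The operator $\mathcal{L}w:=\partial_t w-\tfrac12 k^2\widebar{\eta}^2\partial_k^2 w+a(1-k)\partial_k w+a\,w$ has nonnegative leading coefficient $\tfrac12 k^2\widebar{\eta}^2\ge0$ (parabolicity) and, crucially, its zeroth-order coefficient $+a$ carries the sign that is favourable for the principle (in the convention $\mathcal{L}=\partial_t-a_{ij}\partial_{ij}-b_i\partial_i-c_0$ this reads $c_0=-a\le0$). Therefore $\mathcal{L}w=f\ge0$ together with $w\ge0$ on the parabolic boundary forces $w\ge0$ on the whole domain, which is precisely $\underline{c}\le\widebar{c}$ and hence $\underline{\sigma}\le\widebar{\sigma}$.

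The hard part is making this last step fully rigorous on the \emph{unbounded and degenerate} domain $(t,k)\in(0,T]\times(0,\infty)$: the diffusion coefficient $k^2\widebar{\eta}^2$ vanishes at $k=0$, and one must control the behaviour of $w$ as $k\to\infty$. I would handle spatial infinity by a Phragm\'en--Lindel\"of/growth argument, using that the finite-moment assumption on $s_t$ forces call prices---and hence $w$---to decay faster than any polynomial (the same decay invoked via~\cite{Lee2004} in Proposition~\ref{prop:dupire}), and I would treat $k=0$ as a Fichera boundary where the degeneracy makes the prescribed condition $w(t,0)=0$ consistent. A secondary technical point is the regularity required by the classical statement: the $C^{1,2}$ smoothness of the Black surfaces $\widebar{c},\underline{c}$ (inherited from the smoothness of $c^{\rm BS}$ and of the implied-volatility inputs), absent which one would instead phrase the argument as a viscosity-solution comparison.
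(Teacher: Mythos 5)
Your proof is correct, and it rests on the same key algebraic step as the paper's: write a linear parabolic equation for $w:=\widebar{c}-\underline{c}$, split the diffusion term so that one operator acts on $w$ while the difference $\eta^2[\widebar{\sigma}]-\eta^2[\underline{\sigma}]\ge 0$ multiplies a nonnegative gamma, then convert $\underline{c}\le\widebar{c}$ into $\underline{\sigma}\le\widebar{\sigma}$ via the strict positivity of the Black Vega. Where you diverge is in the choice of PDE and of the positivity argument: the paper works with the \emph{backward} Kolmogorov equation in the spot variable $s$, with terminal condition $w(t,\cdot)=0$, and concludes $w\ge 0$ probabilistically via Feynman--Kac, reading $w(0,s)$ as the value of a strip of nonnegative coupons $\tfrac12(\eta^2[\widebar{\sigma}]-\eta^2[\underline{\sigma}])s^2\partial_s^2\underline{c}$; you work with the \emph{forward} extended Dupire equation~\eqref{eq:dupire} in the strike variable $k$, with initial condition $w(0,\cdot)=0$, and conclude analytically via the parabolic maximum principle. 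The two routes are dual, and each buys something: the paper's backward/probabilistic version avoids any discussion of the degenerate, unbounded forward domain, but it silently uses convexity of the option price in the \emph{spot} ($\partial_s^2\underline{c}\ge 0$), which is a genuine theorem on convexity preservation by one-dimensional diffusions; your forward version only needs convexity in the \emph{strike}, i.e.\ nonnegativity of the implied density, which is immediate for prices generated by a model, but it obliges you to confront the degeneracy of $k^2\widebar{\eta}^2$ at $k=0$ and the behaviour at $k\to\infty$ --- issues you correctly flag and outline how to handle (Phragm\'en--Lindel\"of plus the fast decay from the finite-moment assumption), and which the paper's sketch simply does not address. Your observation that the sign of the zeroth-order coefficient $+a$ is the favourable one for the maximum principle is also correct and necessary.
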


\begin{proof}
We write the Kolmogorov backward equation satisfied by a plain vanilla option price when the underlying risk factor follow the dynamics of Proposition~\ref{prop:dupire}. If we term $c$ the price at time $u$ with spot level $s$ for a call option with maturity $t$ and strike $k$ (we omit some dependencies to lighten the notation), namely we can write
\Eq{
\left( \partial_u + a(1-s) \partial_s + \frac{1}{2} \eta^2 s^2 \partial_s^2 \right) c = 0
}%
Then, if we use the local volatility functions $\eta[\underline{\sigma}]$ and $\eta[\widebar{\sigma}]$ defined in the hypotheses of the Lemma, we can write the PDE for the corresponding call option prices ${\underline c}$ and ${\widebar c}$, leading for their difference $w := {\widebar c} - {\underline c}$ to
\Eq{
\left( \partial_u + a(1-s) \partial_s + \frac{1}{2} \eta^2[\widebar{\sigma}] s^2 \partial_s^2 \right) w + \frac{1}{2} ( \eta^2[\widebar{\sigma}] - \eta^2[\underline{\sigma}] ) s^2 \partial_s^2 {\underline c} = 0
\;,\quad
w(t,k) = 0
}%
which can be interpreted via the Feynman-Kac theorem as the pricing equation for a continuous strip of positive coupons, namely
\Eq{
w(0,s) = \frac{1}{2} \,\Ex{0}{\int_0^t ( \eta^2[\widebar{\sigma}] - \eta^2[\underline{\sigma}] ) s^2 \partial_s^2 {\underline c} \,du } \ge0
}%
Since the relationship between the Black price w.r.t.\ the implied volatility is monotone increasing we have proved the Lemma.\end{proof}

We can proceed with the proof of the Proposition~\ref{prop:asymptotics}. We start by calculating the derivatives occurring in~\eqref{eq:locvol} from the one w.r.t.\ the time-to-maturity, we get
\begin{equation}
\label{eq: dt call}
\begin{split}
\partial_t c^{\rm BS}(t,k,\sigma(t,k)) &= \Theta^{\rm BS}(t,k,\sigma(t,k))+\mathcal{V}^{\rm BS}(t,k,\sigma(t,k)) \partial_t \sigma(t,k) \\
&= \mathcal{V}^{\rm BS}(t,k,\sigma(t,k)) \left( \frac{\sigma(t,k)}{2t} + \partial_t \sigma(t,k) \right)
\end{split}
\end{equation}
where the Black Theta is given by
\Eq{
\Theta^{\rm BS}(t,k,\sigma) := \partial_t c^{\rm BS}(t,k,\sigma) = \phi(y+\sigma \sqrt{t}) \frac{\sigma \sqrt{t}}{2} t^{-1} = \frac{\sigma}{2t} \mathcal{V}^{\rm BS}(t,k,\sigma)
}%
We go on with the first derivative w.r.t.\ the strike
\begin{equation}
\label{eq: dk call}
\begin{split}
\partial_k c^{\rm BS}(t,k,\sigma(t,k)) &= \bar{\Delta}^{\rm BS}(t,k,\sigma(t,k))+\mathcal{V}^{\rm BS}(t,k,\sigma(t,k)) \partial_k \sigma(t,k) \\
&= -\Phi(y) + \mathcal{V}^{\rm BS}(t,k,\sigma(t,k)) \partial_k \sigma(t,k)
\end{split}
\end{equation}
where the Black dual Delta is given by
\Eq{
\bar{\Delta}^{\rm BS}(t,k,\sigma) := \frac{\partial c^{\rm BS}(t,k,\sigma)}{\partial k} = - \Phi(y)
}%
and the second derivative w.r.t.\ the strike
\begin{equation}
\label{eq: dkk call}
\begin{split}
\partial_k^2 c^{\rm BS}(t,k,\sigma(t,k)) &= \bar{\Gamma}^{\rm BS}(t,k,\sigma(t,k))+2 \bar{\mathcal D}^{\rm BS}(t,k,\sigma(t,k)) \partial_k \sigma(t,k)+ \\ 
& \mathcal{V}^{\rm BS}(t,k,\sigma(t,k)) \partial_k^2 \sigma(t,k) + {\mathcal W}^{\rm BS}(t,k,\sigma) (\partial_k \sigma(t,k))^2 \\
&= \mathcal{V}^{\rm BS}(t,k,\sigma(t,k)) \bigg( \frac{1}{k^2 \sigma(t,k) t} + 2 \frac{y+\sigma(t,k) \sqrt{t}}{k \sigma(t,k) \sqrt{t}} \partial_k \sigma(t,k) + \\
& \partial_k^2 \sigma(t,k) + \frac{y(y+\sigma(t,k) \sqrt{t})}{\sigma(t,k)} (\partial_k \sigma(t,k))^2  \bigg)
\end{split}
\end{equation}
where the Black dual Gamma, dual Vanna and Volga are given by
\Eq{
\bar{\Gamma}^{\rm BS}(t,k,\sigma) := \partial_k^2 c^{\rm BS}(t,k,\sigma) = \frac{\phi(y)}{k \sigma \sqrt{t}} = \frac{\phi(y+\sigma \sqrt{t})}{k^2 \sigma \sqrt{t}} = \frac{1}{k^2 \sigma t} \mathcal{V}^{\rm BS}(t,k,\sigma)
}%
\Eq{
\bar{\mathcal D}^{\rm BS}(t,k,\sigma) := \partial_k\partial_\sigma c^{\rm BS}(t,k,\sigma) = \frac{y+\sigma \sqrt{t}}{k \sigma \sqrt{t}} \mathcal{V}^{\rm BS}(t,k,\sigma)
}%
\Eq{
{\mathcal W}^{\rm BS}(t,k,\sigma) := \partial_\sigma^2 c^{\rm BS}(t,k,\sigma) = \frac{y(y+\sigma \sqrt{t})}{\sigma} \mathcal{V}^{\rm BS}(t,k,\sigma)
}%

If we substitute~\eqref{eq: dt call},~\eqref{eq: dk call} and~\eqref{eq: dkk call} in~\eqref{eq:locvol} (dropping the dependence of $\sigma(t,k)$ on $t$ and $k$ to maintain notation simpler) we obtain the local volatility in terms of the implied volatility as given by
\begin{equation}
\eta^2(t,k)
= \frac{
    \sigma^2
  + 2 \sigma t \,\partial_t \sigma
  + 2 a \sigma t ( \Lambda(t,k,\sigma) + (1-k) \partial_k \sigma )
}
{
  1
  + 2 k \sqrt{t}(y+ \sigma \sqrt{t}) \partial_k \sigma
  + k^2 \sigma t \partial_k^2\sigma
  + k^2 t y (y+\sigma \sqrt{t}) (\partial_k \sigma)^2
}
\label{eq: rearranged exact local}
\end{equation}%
where we define
\Eq{
\Lambda(t,k,\sigma) := \frac{\Phi(y + \sigma \sqrt{t}) - \Phi(y)}{\mathcal{V}^{\rm BS}(t,k,\sigma)}
}%

Now, we can proceed as in~\cite{Berestycki2002} by defining the formal small-time-limit solution $\psi(k)$ of Equation~\eqref{eq: rearranged exact local} as given by the unique positive solution of
\begin{equation}
\psi^2(k) - \eta^2(0,k) \left( 1 - k \log k \,\partial_k \log \psi(k) \right)^2 = 0
\label{eq: formal limit local}
\end{equation}%
In calculating the above expression we have used Lemma~\ref{lemma:limit} to evaluate the small time limit of $\Lambda$ by assuming that the implied volatility limit exists. Then, we have to prove that the small-time limit of the implied volatility function is indeed given by $\psi$. This can be accomplished by following~\cite{Berestycki2002} and using Lemma~\ref{lemma:comparison} when the comparison principle is required. Here, we sketch only the main reasoning of the proof, while all technical details can be found in the cited paper. We can define the implied volatility functions
\Eq{
\widebar{\sigma}(t,k) := \psi(k) ( 1 + \kappa t )
\;,\quad
\underline{\sigma}(t,k) := \psi(k) ( 1 - \kappa t )
}%
with $\kappa>0$, which are well-defined also in $t=0$. We can calculate the corresponding local volatility functions $\eta[\underline{\sigma}](t,k)$ and $\eta[\widebar{\sigma}](t,k)$ obtained by using Equation~\eqref{eq:locvol}. Then, we can show by direct computation that for any choice of $\kappa$ we can find a time interval $[0,\delta]$ where $\eta[\underline{\sigma}] \le \eta \le \eta[\widebar{\sigma}]$. Hence, by invoking the comparison principle given by Lemma~\ref{lemma:comparison} we obtain $\underline{\sigma} \le \sigma \le \widebar{\sigma}$, and by taking the small-time limit we get that
\Eq{
\sigma(0,k) := \lim_{t\rightarrow 0} \sigma(t,k) = \psi(k)
}%

Finally, to prove the Proposition~\ref{prop:asymptotics} we can solve Equation~\eqref{eq: formal limit local} to obtain Equation~\eqref{eq:harmonic}. By taking the derivative w.r.t.\ to the strike $k$ and taking the limits for near-ATM strikes, namely $k\rightarrow 1$, we get also Equation~\eqref{eq:asymptotics}, and we complete the proof of the proposition.
\end{document}